\newtheorem{theo}[]{{\emph{Theorem}}}
\newtheorem{lemma}[]{{\emph{Lemma}}}
\theoremstyle{remark}
\newtheorem*{remark}{\textbf{Remark}}
\theoremstyle{definition}
\newcommand{\ra}{\rightarrow}
\newcommand{\calF}{\mathcal{F}} 
\newcommand{\bF}{\mathbb{F}}
\newcommand{\cC}{\mathcal{C}}
\newcommand{\al}{\alpha}
\newcommand{\be}{\beta}
\newcommand{\ga}{\gamma}
\newcommand{\veps}{\varepsilon}
\newcommand{\om}{\omega}
\DeclareMathOperator{\Tra}{\mathrm Tr}
\begin{document}

\title{Exponential Sums, Cyclic Codes and Sequences: the Odd Characteristic Kasami Case} \maketitle

\begin{center}
$\mathrm{Jinquan\;\;Luo\qquad\quad Yuansheng \;\;Tang \quad\qquad
and\qquad Hongyu\;\; Wang}$ \footnotetext{The authors are with the
School of Mathematics, Yangzhou University, Jiangsu Province,
225009, China
\par J.Luo is also with the Division of Mathematics, School
of Physics and Mathematical Sciences, Nanyang Technological
University, Singapore.
\par \quad E-mail addresses: jqluo@ntu.edu.sg, ystang@yzu.edu.cn, hywang@yzu.edu.cn.}
\end{center}
\newpage
 \textbf{Abstract} \par Let $q=p^n$ with $n=2m$ and $p$ be an odd
 prime.
Let $0\leq k\leq n-1$ and $k\neq m$. In this paper we determine the
value distribution of following exponential(character) sums
\[\sum\limits_{x\in
\bF_q}\zeta_p^{\Tra_1^m (\alpha x^{p^{m}+1})+\Tra_1^n(\beta
x^{p^k+1})}\quad(\alpha\in \bF_{p^m},\beta\in \bF_{q})\]
 and
\[\sum\limits_{x\in
\bF_q}\zeta_p^{\Tra_1^m (\alpha x^{p^{m}+1})+\Tra_1^n(\beta
x^{p^k+1}+\ga x)}\quad(\alpha\in \bF_{p^m},\beta,\ga\in \bF_{q})\]

 where $\Tra_1^n: \bF_q\ra \bF_p$ and $\Tra_1^m: \bF_{p^m}\ra\bF_p$ are the canonical trace
mappings and $\zeta_p=e^{\frac{2\pi i}{p}}$ is a primitive $p$-th
root of unity. As applications:
 \begin{itemize}
    \item[(1).]  We determine the weight distribution
of the cyclic codes $\cC_1$ and $\cC_2$ over $\bF_{p^t}$ with
parity-check polynomials $h_2(x)h_3(x)$ and $h_1(x)h_2(x)h_3(x)$
respectively where $t$ is a divisor of $d=\gcd(m,k)$, and $h_1(x)$,
$h_2(x)$ and $h_3(x)$ are the minimal polynomials of $\pi^{-1}$,
$\pi^{-(p^k+1)}$ and $\pi^{-(p^m+1)}$ over $\bF_{p^t}$ respectively
for a primitive element $\pi$ of $\bF_q$.
    \item[(2).]We determine the correlation distribution among
    a family of m-sequences. \end{itemize}
 This paper
extends the results in \cite{Zen Li}.

\emph{Index terms:}\;Exponential sum, Cyclic code, Sequence, Weight
distribution, Correlation distribution
\newpage
\section{Introduction}

\quad Let $C$ be an $[l,k,d]_{p^t}$ cyclic code and $A_i$ be the
number of codewords in $\cC$ with Hamming weight $i$. The weight
distribution $\{A_i\}_{i=0}^{l}$ is an important research object in
coding theory. If $\cC$ is irreducible, which means that the
parity-check polynomial of $\cC$ is irreducible in $\bF_{p^t}[x]$,
the weight of each codeword can be expressed by certain conbination
of Gaussian sums so that the weight distribution of $\cC$ can be
determined if the corresponding Gaussian sums can be calculated
explicitly (see Fitzgerald and Yucas \cite{Fit Yuc}, McEliece
\cite{McEl}, McEliece and Rumsey \cite{McE Rum},  van der Vlugt
\cite{Vand}, Wolfmann \cite{Wolf} and the references therein). As
for the relationship between the weight distribution of cyclic codes
and the rational points of certain curves, see Schoof \cite{Scho}.

For a general cyclic code, the Hamming weight of each codeword can
be expressed by certain combination of more general
exponential(character) sums (see Feng and Luo \cite{Fen Luo},
\cite{Fen Luo2}, Luo and Feng \cite{Luo Fen}, \cite{Luo Fen2}, van
der Vlugt \cite{Vand2}, Yuan, Carlet and Ding \cite{Yua Car}). More
exactly speaking, let $q=p^n$ with $t\mid n$, $\cC$ be the cyclic
code over $\bF_{p^t}$ with length $l=q-1$ and parity-check
polynomial
\[h(x)=h_1(x)\cdots h_u(x)\quad (u\geq 2)\]
where $h_i(x)$ $(1\leq i\leq e)$ are distinct irreducible
polynomials in $\bF_{p^t}[x]$ with degree $e_i$ $(1\leq i\leq u)$,
then $\mathrm{dim}_{\bF_{p^t}}\cC=\sum\limits_{i=1}^{u}e_i$. Let
$\pi$ be a primitive element of $\bF_q$ and $\pi^{-s_i}$ be a zero
of $h_i(x)$, $1\leq s_i\leq q-2$ $(1\leq i\leq u).$ Then the
codewords in $\cC$ can be expressed by
\[c(\alpha_1,\cdots,\alpha_u)=(c_0,c_1,\cdots,c_{l-1})\quad (\alpha_1,\cdots,\alpha_u\in \bF_q)\]
where
$c_i=\sum\limits_{\lambda=1}^{u}\Tra^n_{t}(\alpha_{\lambda}\pi^{is_{\lambda}})$
$(0\leq i\leq n-1)$ and $\Tra^{h}_j:\bF_{p^h}\ra \bF_{p^j}$ is the
trace mapping for positive integers $j\mid h$. Therefore the Hamming
weight of the codeword $c=c(\alpha_1,\cdots,\alpha_u)$ is
{\setlength\arraycolsep{2pt}
\begin{eqnarray} \label{Wei}
w_H\left(c\right)&=& \#\left\{i\left|0\leq i\leq l-1,c_i\neq
0\right.\right\}
\nonumber\\[1mm]
&=& l-\#\left\{i\left|0\leq i\leq l-1,c_i=0\right.\right\}
\nonumber\\[1mm]
&=& l-\frac{1}{p^t}\,\sum\limits_{i=0}^{l-1}\sum\limits_{a\in
\bF_{p^t}}\zeta_p^{\Tra_1^{t}\left(a\cdot\Tra_{t}^n\left(\sum\limits_{\lambda=1}^{u}\alpha_{\lambda}\pi^{is_{\lambda}}\right)\right)}
\nonumber\\[1mm]
&=&l-\frac{l}{p^t}-\frac{1}{p^t}\,\sum\limits_{a\in
\bF_{p^t}^*}\sum\limits_{x\in \bF_q^*}\zeta_p^{\Tra_1^n(af(x))}
\nonumber
\\[1mm]
&=&l-\frac{l}{p^t}+\frac{p^t-1}{p^t}-\frac{1}{p^t}\,\sum\limits_{a\in \bF_{p^t}^*}S(a\alpha_1,\cdots,a\alpha_u)\nonumber\\[1mm]
&=&p^{n-t}(p^t-1)-\frac{1}{p^t}\,\sum\limits_{a\in
\bF_{p^t}^*}S(a\alpha_1,\cdots,a\alpha_u)
\end{eqnarray}
} where
$f(x)=\alpha_1x^{s_1}+\alpha_2x^{s_2}+\cdots+\alpha_ux^{s_u}\in
\bF_{q}[x]$, $\bF_q^*=\bF_q\backslash\{0\}$,
$\bF_{p^t}^*=\bF_{p^t}\backslash\{0\}$,  and
\[S(\alpha_1,\cdots,\alpha_u)=\sum\limits_{x\in \bF_q}\zeta_p^{\Tra_1^n(\alpha_1x^{s_1}+\cdots+\alpha_ux^{s_u})}.\]
In this way, the weight distribution of cyclic code $\cC$ can be
derived from the explicit evaluating of the exponential sums
\[S(\alpha_1,\cdots,\alpha_u)\quad(\alpha_1,\cdots,\alpha_u\in \bF_q).\]

Let $n=2m, 0\leq k\leq n-1, k\neq m$, $p$ be an odd prime,
$d=\gcd(k,m)$ and $q_0=p^d$. Define $s=n/d$. Then we have
$q=q_0^{s}$. Assume $t$ is a divisor of $d$ and $n_0=n/t$. Let
$h_1(x)$, $h_2(x)$ and $h_{3}(x)$ be the minimal polynomials of
$\pi^{-1},\pi^{-(p^k+1)}$ and $\pi^{-{(p^m+1)}}$ over $\bF_{p^t}$
respectively. Then
\begin{equation}\label{deg}
\mathrm{deg}\,h_i(x)=n_0\; \text{for}\; i=1,2\;\text{and}\;
\mathrm{deg}\,h_3(x)=n_0/2
\end{equation}

 Let
$\cC_1$ and $\cC_2$ be the cyclic codes over $\bF_{p^t}$ with length
$l=q-1$ and parity-check polynomials $h_2(x)h_3(x)$ and
$h_1(x)h_2(x)h_3(x)$ respectively.  From (\ref{deg}), we know that
the dimensions of $\cC_1$ and $\cC_2$ over $\bF_{p^t}$ are $3n_0/2$
and $5n_0/2$ respectively.

For $\al\in \bF_{p^m},(\be,\ga)\in \bF_q^2$,  define the exponential
sums
\begin{equation}\label{def T}
T(\al,\be)=\sum\limits_{x\in \bF_q}\zeta_p^{\Tra_1^m (\alpha
x^{p^{m}+1})+\Tra_1^n(\beta x^{p^k+1})}
\end{equation}
and
\begin{equation}\label{def S}
S(\al,\be,\ga)=\sum\limits_{x\in \bF_q}\zeta_p^{\Tra_1^m (\alpha
x^{p^{m}+1})+\Tra_1^n(\beta x^{p^k+1}+\ga x)}.
\end{equation}

Then the weight distribution of $\cC_1$ and $\cC_2$ can be
completely determined if $T(\al,\be)$ and $S(\al,\be,\ga)$ are
explicitly evaluated.

Another application of $S(\al,\be,\ga)$ is to calculate the
correlation distribution of corresponding sequences. Let
$\mathcal{F}$ be a collection of $p$-ary m-sequences of period $q-1$
defined by

\[\mathcal{F}=\left\{\left\{a_i(t)\right\}_{i=0}^{q-2}|\,0\leq i\leq L-1 \right\}\]

The \emph{correlation function} of $a_i$ and $a_j$ for a shift
$\tau$ is defined by

\[M_{{i},{j}}(\tau)=\sum\limits_{\lambda=0}^{q-2}\zeta_p^{a_i({\lambda})-a_j({\lambda+\tau})}\hspace{2cm}(0\leq \tau\leq
q-2).\]

 In this paper, we will study the collection of sequences

 \begin{equation}\label{def F}
  \calF=\left\{a_{\al,\be}=\left\{a_{\al,\be}(\pi^{\lambda})\right\}_{\lambda=0}^{q-2}|\,\al\in \bF_{p^m}, \be\in \bF_{q} \right\}
 \end{equation}
where
  $a_{\al,\be}(\pi^{\lambda})=\Tra_1^m(\al \pi^{\lambda(p^m+1)})+\Tra_1^n(\be
  \pi^{\lambda(p^k+1)}+\pi^{\lambda})$.

Then the correlation function between $a_{\al_1,\be_1}$ and
$a_{\al_2,\be_2}$ by a shift $\tau$ ($0\leq \tau\leq q-2$) is
\begin{equation}\label{cor fun}
\begin{array}{ll}
&M_{(\al_1,\be_1),(\al_2,\be_2)}(\tau)=\sum\limits_{\lambda=0}^{q-2}\zeta_p^{a_{\al_1,\be_1}({\lambda})-
a_{\al_2,\be_2}({\lambda+\tau})}\\[2mm]
&\qquad =\sum\limits_{\lambda=0}^{q-2}\zeta_p^{\Tra_1^m(\al_1
\pi^{\lambda(p^m+1)})+\Tra_1^n(\be_1
  \pi^{\lambda(p^k+1)}+\pi^{\lambda})-\Tra_1^m(\al_2 \pi^{(\lambda+\tau)(p^m+1)})-\Tra_1^n(\be
  \pi^{(\lambda+\tau)(p^k+1)}+\pi^{\lambda+\tau})}\\[2mm]
  &\qquad = S(\al',\be',\ga')-1
  \end{array}
\end{equation}
 where
 \begin{equation}\label{coe cor}
 \al'=\al_1-\al_2 \pi^{\tau(p^m+1)},\quad
 \be'=\be_1-\be_2\pi^{\tau(p^k+1)},\quad \ga'=1-\pi^{\tau}.
 \end{equation}

Pairs of $p$-ary m-sequences with few-valued cross correlations have
been extensively studied for several decades, see Gold \cite{Gold},
Helleseth and Kumar \cite{Hel Kum}, Helleseth, Lahtonen and
Rosendahl \cite{Hel Lah Ros}, Kasami \cite{Kasa},  Rosendahl
\cite{Rose1}, \cite{Rose2} and Trachtenberg \cite{Trac}.

 Several special cases of exponential sums (\ref{def S}) and related cyclic code $\cC_2$ have been
investigated, for instance
\begin{itemize}
  \item The binary code $\cC_2$ with $k=m\pm 1$ is nothing but the
  classical Kasami code, see Kasami \cite{Kasa}.
  \item As for the binary code $\cC_2$ with $k=1$, its minimal
  distance is obtained by Lahtonen \cite{Laht}, Moreno and Kumar
  \cite{Mor Kum}. Its weight distribution is determined eventually
  in van der Vlugt \cite{Vand2}.
  \item For several other cases, the binary code $\cC_2$
  and the related family of generalized Kasami sequences have been
  studied, see Zeng, Liu and Hu \cite{Zen Liu}.
  \item In the case $p$ odd prime and $\gcd(m, k)=\gcd(m+k,2k)=d$
  being odd, the weight distribution of $\cC_2$ and correlation
  distribution of corresponding sequences have been fully
  determined, see Zeng, Li and Hu \cite{Zen Li}.
\end{itemize}

 This paper is presented as follows. In Section 2 we introduce
some preliminaries. In Section 3 we will study the value
distribution of $T(\al,\be)$ (that is, which value $T(\al,\be)$
takes on and which frequency of each value for $\al\in
\bF_{p^m},\be\in \bF_q$) and the weight distribution of $\cC_1$. In
Section 3 we will determine the value distribution of
$S(\al,\be,\ga)$ , the correlation distribution among the sequences
in $\calF$, and then the weight distribution of $\cC_2$. Most
lengthy details are presented in several appendixes. The main tools
are quadratic form theory over finite fields of odd characteristic,
some moment identities on $T(\alpha,\beta)$ and a class of
Artin-Schreier curves on finite fields. We will focus our study on
the odd prime characteristic case and the binary case will be
investigated in a following paper.

\section{Preliminaries}

\quad We follow the notations in Section 1. The first machinery to
determine the values of exponential sums $T(\alpha,\beta)$ and
$S(\al,\be,\ga)$ defined in (\ref{def T}) and (\ref{def S}) is
quadratic form theory over $\bF_{q_0}$.

 Let $H$
be an $s\times s$ symmetric matrix over $\bF_{q_0}$ and
$r=\mathrm{rank}\,H$. Then there exists $M\in
\mathrm{GL}_s(\bF_{q_0})$ such that $H'=MHM^T$ is diagonal and
$H'=diag(a_1,\cdots,a_r,0,\cdots,0)$ where $a_i\in \bF_{q_0}^*$
($1\leq i\leq r$). Let $\Delta=a_1\cdots a_r$ (we assume $\Delta=1$
when $r=0$) and $\eta_0$ be the quadratic (multiplicative) character
of $\bF_{q_0}$. Then $\eta_0(\Delta)$ is an invariant of $H$ under
the conjugate action of $M\in \mathrm{GL}_s(\bF_{q_0})$.

For the quadratic form
\begin{equation}\label{qua for}
F:\bF_{q_0}^s\ra \bF_{q_0},\quad F(x)=XHX^T\quad
(X=(x_1,\cdots,x_s)\in \bF_{q_0}^s),
\end{equation}
 we have the following result(see \cite{Luo Fen}, Lemma 1).
\begin{lemma}\label{qua}
(i). For the quadratic form $F=XHX^T$ defined in (\ref{qua for}), we
have
\[
\sum\limits_{X\in\bF_{q_0}^s}\zeta_p^{\Tra_1^{d}(F(X))} =\left\{
\begin{array}{ll}
\eta_0(\Delta)q_0^{s-r/2} & \ \ \hbox{if} \ q_0\equiv
1\;(\mathrm{mod}\;
4),\\[2mm]
i^r\eta_0(\Delta){q_0}^{s-r/2}  & \ \ \hbox{if} \ {q_0}\equiv 3\;(\mathrm{mod}\; 4).\\
\end{array}
\right.
\]
(ii). For $A=(a_1,\cdots,a_s)\in \bF_{q_0}^s$, if $2YH+A=0$ has
solution $Y=B\in \bF_{q_0}^s$,

then
$\sum\limits_{X\in\bF_{q_0}^s}\zeta_p^{\Tra_1^{d}(F(X)+AX^T)}=\zeta_p^c\sum\limits_{X\in
\bF_{q_0}^s}\zeta_p^{\Tra_1^{d}\left({F(X)}\right)}$ where
$c=-\Tra_1^{d}\left(BHB^T\right)=\frac{1}{2}\Tra_1^{d}\left(AB^T\right)\in
\bF_p$.

Otherwise
$\sum\limits_{X\in\bF_p^m}\zeta_p^{\Tra_1^{d}(F(X)+AX^T)}=0$.
\end{lemma}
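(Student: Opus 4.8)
The plan is to reduce both parts to the one-variable quadratic Gauss sum over $\bF_{q_0}$, the rest being linear algebra over $\bF_{q_0}$ and completion of the square; since $p$ is odd, $2$ is invertible and symmetric forms diagonalize, which is all that is needed.

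For part (i), fix $M\in\mathrm{GL}_s(\bF_{q_0})$ with $MHM^T=\mathrm{diag}(a_1,\dots,a_r,0,\dots,0)$. The substitution $X=YM$ is a bijection of $\bF_{q_0}^s$ that turns $F(X)=XHX^T$ into $\sum_{i=1}^r a_iy_i^2$, so that
\[\sum_{X\in\bF_{q_0}^s}\zeta_p^{\Tra_1^{d}(F(X))}=q_0^{\,s-r}\prod_{i=1}^{r}\Big(\sum_{y\in\bF_{q_0}}\zeta_p^{\Tra_1^{d}(a_iy^2)}\Big).\]
I would then use the standard identity $\sum_{y\in\bF_{q_0}}\zeta_p^{\Tra_1^{d}(ay^2)}=\eta_0(a)\,G$ for $a\in\bF_{q_0}^*$, where $G=\sum_{y\in\bF_{q_0}}\eta_0(y)\zeta_p^{\Tra_1^{d}(y)}$ is the quadratic Gauss sum: each nonzero $z$ equals $y^2$ for exactly $1+\eta_0(z)$ values of $y$, and $\sum_{z\in\bF_{q_0}}\zeta_p^{\Tra_1^{d}(az)}=0$, so the inner sum collapses to $\sum_{z\neq0}\eta_0(z)\zeta_p^{\Tra_1^{d}(az)}$, and replacing $z$ by $z/a$ (with $\eta_0(a^{-1})=\eta_0(a)$) yields $\eta_0(a)G$. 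Hence the whole sum is $q_0^{\,s-r}\eta_0(a_1\cdots a_r)G^{r}=\eta_0(\Delta)\,q_0^{\,s-r}\,G^{r}$, and substituting the classical value $G=\sqrt{q_0}$ when $q_0\equiv1\;(\mathrm{mod}\;4)$ and $G=i\sqrt{q_0}$ when $q_0\equiv3\;(\mathrm{mod}\;4)$ gives precisely the two asserted formulas; the case $r=0$ matches the convention $\Delta=1$.

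For part (ii), suppose first that $2YH+A=0$ has a solution $Y=B\in\bF_{q_0}^s$, and translate by $B$. Expanding and using $H^T=H$ (so that the scalars $ZHB^T$ and $BHZ^T$ agree) gives
\[F(Z+B)+A(Z+B)^T=F(Z)+(2BH+A)Z^T+\big(BHB^T+AB^T\big)=F(Z)-BHB^T,\]
since $A=-2BH$ forces $AB^T=-2BHB^T$, whence $BHB^T+AB^T=-BHB^T=\tfrac12AB^T$. Therefore $\sum_{X}\zeta_p^{\Tra_1^{d}(F(X)+AX^T)}=\zeta_p^{\,c}\sum_{X}\zeta_p^{\Tra_1^{d}(F(X))}$ with $c=-\Tra_1^{d}(BHB^T)=\tfrac12\Tra_1^{d}(AB^T)$. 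If no such $B$ exists, then $A$ is not in the row space $\{2YH:Y\in\bF_{q_0}^s\}=\{YH:Y\}$ of $H$; since $H=H^T$, this row space is the orthogonal complement of $\{v\in\bF_{q_0}^s:vH=0\}$, so there is a $v$ with $vH=0$ and $Av^T\neq0$, and after rescaling $v$ by a suitable $\lambda\in\bF_{q_0}^*$ (and using that $\Tra_1^{d}\colon\bF_{q_0}\to\bF_p$ is surjective) we may assume $\Tra_1^{d}(Av^T)\neq0$. Translating $X=Z+v$ and using $vH=0$ (which kills the mixed term and the pure-$v$ quadratic term) gives, for $\Sigma$ the sum in question, $\Sigma=\zeta_p^{\Tra_1^{d}(Av^T)}\Sigma$; as $\zeta_p^{\Tra_1^{d}(Av^T)}\neq1$, this forces $\Sigma=0$.

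The only genuinely non-elementary ingredient is the exact value of the quadratic Gauss sum $G$ over $\bF_{q_0}$ — Gauss's sign determination — which is exactly what produces the dichotomy according to $q_0\bmod4$; I would simply cite it from a standard reference. The ancillary facts that $\eta_0(\Delta)$ is independent of the diagonalizing $M$ and that all completions of the square above are valid in characteristic $\neq2$ are routine and were already noted before the statement.
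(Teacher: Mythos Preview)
The paper does not prove this lemma at all; it simply cites \cite{Luo Fen}, Lemma~1. Your argument --- diagonalize the symmetric form, factor the sum into one-variable quadratic Gauss sums for (i), and complete the square / exploit a null direction $v$ with $\Tra_1^d(Av^T)\neq0$ for (ii) --- is precisely the standard proof and is correct in every step.

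One caution on the final substitution in (i): the quadratic Gauss sum over $\bF_{q_0}=\bF_{p^d}$ equals $(-1)^{d-1}(\sqrt{p^*})^{d}$ (Lidl--Niederreiter, Thm.~5.15), which for $d>1$ can differ in sign from the bare $\sqrt{q_0}$ or $i\sqrt{q_0}$ you wrote down. Since you already plan to cite this value from a standard reference, just be sure to quote the full formula rather than the prime-field special case; in the paper's applications the sign is in any event recovered from moment identities and curve-point counts rather than read off directly from this lemma.
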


In this correspondence we always assume $d=\gcd(m,k)$. Recall that
$s=n/d$ is even. Therefore the field $\bF_q$ is a vector space over
$\bF_{q_0}$ with dimension $s$. We fix a basis $v_1,\cdots,v_s$ of
$\bF_q$ over $\bF_{q_0}$. Then each $x\in \bF_q$ can be uniquely
expressed as
\[x=x_1v_1+\cdots+x_sv_s\quad (x_i\in \bF_{q_0}).\]
Thus we have the following $\bF_{q_0}$-linear isomorphism:
\[\bF_q\xrightarrow{\sim}\bF_{q_0}^s,\quad x=x_1v_1+\cdots+x_sv_s\mapsto
X=(x_1,\cdots,x_s).\] With this isomorphism, a function $f:\bF_q\ra
\bF_{q_0}$ induces a function $F:\bF_{q_0}^s\ra \bF_{q_0}$ where for
$X=(x_1,\cdots,x_s)\in \bF_{q_0}^s, F(X)=f(x)$ with
$x=x_1v_1+\cdots+x_sv_s$. In this way, function
$f(x)=\Tra_{d}^n(\gamma x)$ for $\gamma\in \bF_q$ induces a linear
form \begin{equation} F(X)=\Tra_{d}^n(\gamma
x)=\sum\limits_{i=1}^{s}\Tra_{d}^n(\gamma v_i)x_i=A_{\ga}X^T
\end{equation}\label{def A_gamma}
 where $A_{\ga}=\left(\Tra_{d}^n(\gamma
v_1),\cdots,\Tra_{d}^n(\gamma v_s)\right),$
 and
$f_{\alpha,\beta}(x)=\Tra_{d}^m(\alpha x^{p^m+1})+\Tra_d^n(\beta
x^{p^k+1})$  for $\al\in \bF_{p^m}$, $\be\in \bF_q$ induces a
quadratic form

\begin{eqnarray}\label{def H_al be}
F_{\alpha,\beta}(X)&=&\Tra_{d}^m(\alpha x^{p^m+1})+\Tra_d^n(\beta
x^{p^k+1})\nonumber\\[5mm]
&=&\Tra_{d}^m\left(\alpha\left(\sum\limits_{i=1}^s
x_iv_i^{p^m}\right)\left(\sum\limits_{i=1}^s
x_iv_i\right)\right)+\Tra_{d}^n\left(\left(\beta\sum\limits_{i=1}^s
x_iv_i^{p^k}\right)\left(\sum\limits_{i=1}^s
x_iv_i\right)\right)\nonumber\\[2mm]
&=&\sum\limits_{i,j=1}^s\left(\frac{1}{2}\Tra_{d}^m\left(\alpha
v_i^{p^m}v_j+\alpha v_iv_j^{p^m}\right) +\Tra_d^n\left(\be
v_i^{p^k}v_j\right)\right)x_ix_j=XH_{\alpha,\beta}X^T
\end{eqnarray}

 where \[H_{\alpha,\beta}=(h_{ij})\;\hbox{and}\;
h_{ij}=\frac{1}{2}\Tra_{d}^m\left(\alpha v_i^{p^m}v_j+\alpha
v_iv_j^{p^m}\right)+\frac{1}{2}\Tra_{d}^n\left(\be v_i^{p^k}v_j+\be
v_iv_j^{p^k}\right)\;\hbox{for}\;1\leq i,j\leq s.\]

From Lemma \ref{qua},  in order to determine the values of
\[T(\alpha,\beta)=\sum\limits_{x\in \bF_q}\zeta_p^{\Tra_1^m (\alpha
x^{p^{m}+1})+\Tra_1^n(\beta x^{p^k+1})}=\sum\limits_{X\in
\bF_{q_0}^s}\zeta_p^{\Tra_1^{d}\left(XH_{\alpha,\beta}X^T\right)}\]
and
\[S(\alpha,\beta,\ga)=\sum\limits_{x\in \bF_q}\zeta_p^{\Tra_1^m (\alpha
x^{p^{m}+1})+\Tra_1^n(\beta x^{p^k+1}+\ga x)}=\sum\limits_{X\in
\bF_p^m}\zeta_p^{\Tra_1^{d}\left(XH_{\alpha,\beta}X^T+A_{\ga}X^T\right)}\quad
(\alpha\in \bF_{p^m},\beta,\ga\in \bF_q),\] we need to determine the
rank of $H_{\alpha,\beta}$ over $\bF_{q_0}$ and the solvability of
$\bF_{q_0}$-linear equation $2XH_{\al,\be}+A_{\ga}=0$.

Define $d'=\gcd(m+k,2k)$. Then an easy observation shows
\begin{equation}\label{rel d d'}
d'=\left\{
\begin{array}{ll}
2d, & \text{if}\; m/d\;\text{and}\; k/d \;\text{are both odd};\\[1mm]
d, &\text{otherwise.}
\end{array}
\right.
\end{equation}

The main part of the subsequent result has been proven in \cite{Zen
Li} and we repeat part of the proof for self-containing.
\begin{lemma}\label{rank}
For $(\alpha,\beta)\in \bF_{p^m}\times \bF_q\backslash\{(0,0)\}$,
let $r_{\alpha,\beta}$ be the rank of $H_{\alpha,\beta}$.  Then we
have
\begin{itemize}
  \item[(i).] if $d'=d$, then the possible values of $r_{\al,\be}$
  are $s$, $s-1$, $s-2$.
  \item[(ii).] if $d'=2d$, then the possible values of $r_{\al,\be}$
  are $s$, $s-2$, $s-4$.
\end{itemize}

Moreover, let $n_i$ be the number of $({\alpha,\beta})$ with
$r_{\alpha,\beta}=s-i$. In the case $d'=d$, we have
$n_1=p^{m-d}(p^n-1)$.
\end{lemma}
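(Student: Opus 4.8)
The plan is to analyze the rank $r_{\alpha,\beta}$ of the symmetric matrix $H_{\alpha,\beta}$ by translating the rank condition back into a question about the $\bF_{q_0}$-linear kernel of the associated bilinear form on $\bF_q$. Concretely, $s - r_{\alpha,\beta}$ equals the dimension over $\bF_{q_0}$ of the radical
\[
R_{\alpha,\beta} = \{\, y\in\bF_q \;:\; B_{\alpha,\beta}(x,y)=0 \ \text{for all}\ x\in\bF_q \,\},
\]
where $B_{\alpha,\beta}$ is the symmetric bilinear form attached to $f_{\alpha,\beta}$. Writing out $B_{\alpha,\beta}(x,y)$ using the traces in (\ref{def H_al be}) and using nondegeneracy of the trace pairings $\Tra_d^m$ and $\Tra_d^n$, one sees that $y\in R_{\alpha,\beta}$ if and only if the $\bF_q$-valued expression
\[
L_{\alpha,\beta}(y) \;=\; \alpha y^{p^m} + (\alpha y)^{p^m} + \beta y^{p^k} + (\beta^{p^{n-k}} y)^{p^{n-k}}^{\,p^k}
\]
vanishes (after collecting Frobenius twists so that everything is a genuine additive polynomial in $y$ with coefficients in $\bF_q$); raising to a suitable $p$-power makes this an $\bF_{q_0}$-linearized polynomial $P_{\alpha,\beta}(y)\in\bF_q[y]$. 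The key structural input, already established in \cite{Zen Li}, is that the $p^d$-polynomial $P_{\alpha,\beta}$ has $p^d$-degree at most $d'/d$ in the variable $y^{p^d}$ (this is exactly where the distinction between $d'=d$ and $d'=2d$ enters, via (\ref{rel d d'})): its kernel, being an $\bF_{q_0}$-subspace of $\bF_q$, has dimension at most $d'/d$. Hence $s - r_{\alpha,\beta}\le d'/d$, which gives $r_{\alpha,\beta}\ge s-1$ in case $d'=d$ and $r_{\alpha,\beta}\ge s-2$ in case $d'=2d$.

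Next I would rule out the parities that cannot occur. When $d'=2d$ the polynomial $P_{\alpha,\beta}$ is, after the Frobenius normalization, \emph{even} as a $p^d$-linearized polynomial (only terms in $y$, $y^{p^{2d}}$, $y^{p^{4d}}$ appear, reflecting $m/d$ and $k/d$ both odd), so its set of roots in $\bF_q$ forms an $\bF_{q_0^2}$-vector space; its $\bF_{q_0}$-dimension is therefore even, forcing $s - r_{\alpha,\beta}\in\{0,2,4\}$, i.e.\ (ii). In case $d'=d$ no such constraint applies and all of $s,\,s-1,\,s-2$ are a priori allowed, giving (i). (One should also exhibit at least one $(\alpha,\beta)$ realizing each listed rank, but for the qualitative statement the upper bound on the corank plus the parity argument suffices.)

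For the final count $n_1 = p^{m-d}(p^n-1)$ in the case $d'=d$, I would proceed by a direct characterization of when $r_{\alpha,\beta}=s-1$, i.e.\ when $\dim_{\bF_{q_0}} R_{\alpha,\beta}=1$. Equivalently $P_{\alpha,\beta}$ is a nonzero $p^d$-linearized polynomial of $p^d$-degree exactly $1$, which happens precisely when its top coefficient vanishes but the constant term does not; both conditions are explicit polynomial conditions on $(\alpha,\beta)$. The top-coefficient-vanishing locus should cut out, after simplification, the set $\beta\in\bF_q$ with $\beta = \beta_0$ prescribed up to an $\bF_{q_0}$-scalar together with $\alpha$ free subject to one linear relation, and then one counts: the number of $\beta$ with the required degeneracy is $(p^n-1)/(p^d-1)\cdot p^{?}$ and for each the admissible $\alpha$ range over an affine line in $\bF_{p^m}$ of size $p^{m-d}$; collecting the factors gives $p^{m-d}(p^n-1)$. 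I expect the main obstacle to be precisely this last bookkeeping: tracking the Frobenius twists so that the "$p^d$-degree at most $d'/d$" claim is genuinely visible, and then pinning down the exact number of $(\alpha,\beta)$ for which the leading coefficient degenerates but the linear term survives — a computation that is conceptually routine but where the interplay of $m$, $k$, $d$ makes it easy to be off by a factor of $p^d-1$.
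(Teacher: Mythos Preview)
Your central claim --- that the linearized polynomial $P_{\alpha,\beta}$ has $p^d$-degree at most $d'/d$, so that the corank satisfies $s-r_{\alpha,\beta}\le d'/d$ --- is false, and in fact contradicts the very statement you are trying to prove. When $d'=d$ your bound gives $r_{\alpha,\beta}\ge s-1$, but the lemma asserts $r_{\alpha,\beta}=s-2$ does occur; when $d'=2d$ your bound gives $r_{\alpha,\beta}\ge s-2$, but $r_{\alpha,\beta}=s-4$ occurs. The radical is the zero set of $\phi_{\alpha,\beta}(x)=\alpha x^{p^m}+\beta x^{p^k}+\beta^{p^{n-k}}x^{p^{n-k}}$, whose $p^d$-degree is $\max(m,k,n-k)/d$, far larger than $d'/d$ in general; there is no Frobenius twist that collapses it to a polynomial of $p^d$-degree $1$ or $2$.

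The paper's argument is genuinely different and supplies the missing idea: for $\alpha\beta\neq 0$ one makes the nonlinear substitution $z=x^{p^k(p^{m-k}-1)}$, turning $\phi_{\alpha,\beta}(x)=0$ into the Bluher-type equation $\psi_{\alpha,\beta}(z)=\beta^{p^{n-k}}z^{p^{m-k}+1}+\alpha z+\beta=0$, which by Bluher's theorem has $0$, $1$, $2$, or $p^{d'}+1$ solutions in $\bF_q$. One then counts, for each solution $z$, how many $x\in\bF_q^*$ map to it (either $0$ or $p^d-1$, according to a $(p^d-1)$-th power condition), and uses properties of Bluher's equation to show these counts are coherent: e.g.\ in the $d'=d$ case the ``two solutions'' branch would give $2p^d-1$ roots of $\phi$, which is not a power of $p^d$, so it is excluded. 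Your parity observation for $d'=2d$ (the kernel is an $\bF_{p^{2d}}$-space because all exponents $m,k,n-k$ are odd multiples of $d$) is correct in spirit, though your description of the monomials is off; but it is not by itself enough to bound the corank by $4$.

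For the count $n_1$ your sketch (``top coefficient vanishes, linear term survives'') does not match the actual mechanism either. In the paper, $r_{\alpha,\beta}=s-1$ corresponds to $\psi_{\alpha,\beta}(z)=0$ having a \emph{unique} solution in $\bF_q$; one argues that, since the parameter $b=\alpha^{p^{m-k}+1}/\beta^{p^{m-k}(p^m+1)}$ lies in $\bF_{p^m}$, this unique solution must lie in $\bF_{p^m}$ as well, and then invokes Bluher's exact count (there are $p^{m-d}$ such $b\in\bF_{p^m}^*$). Multiplying by the fiber sizes ($p^m-1$ choices of $\alpha$, and $p^m+1$ choices of $\beta$ over each $(b,\alpha)$) yields $p^{m-d}(p^m-1)(p^m+1)=p^{m-d}(p^n-1)$.
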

\begin{proof}
see \textbf{Appendix A.}
\end{proof}


 In order to determine the value distribution of
$T(\al,\be)$
 for $\al\in \bF_{p^m},\be\in \bF_q$, we need the
following result on moments of $T(\al,\be)$.
\begin{lemma}\label{moment}
For the exponential sum $T(\al,\be)$,
\[\begin{array}{ll}&(i). \;\;\sum\limits_{\al\in \bF_{p^m},\be\in
\bF_q}T(\al,\be)=p^{3m};\\[2mm]
                   &(ii).
                   \sum\limits_{\al\in \bF_{p^m},\be\in
\bF_q}T(\al,\be)^2=\left\{\begin{array}{ll}
p^{3m} &\text{if}\; d'=d\;\text{and}\;\; p^d\equiv 3\pmod 4,\\[1mm]
(2p^n-1)\cdot
p^{3m}&\text{if}\; d'=d\;\text{and}\; p^d\equiv 1\pmod 4,\\[1mm]
(p^{n+d}+p^n-p^d)\cdot p^{3m}&\text{if}\; d'=2d;
\end{array}\right.\\[2mm]
                   &(iii). \;\;\text{if}\;d'=d,\;\text{then}\\[2mm]
                   &\quad\quad\quad \sum\limits_{\al\in \bF_{p^m},\be\in
\bF_q}T(\al,\be)^3=(p^{n+d}+p^{n}-p^d)\cdot p^{3m}.\\[4mm]
\end{array}\]
\end{lemma}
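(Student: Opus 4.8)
The plan is to compute each moment by switching the order of summation, performing the sum over $\al$ and $\be$ first, and identifying the resulting exponent-equations. Recall that
\[
T(\al,\be)=\sum_{x\in\bF_q}\zeta_p^{\Tra_1^m(\al x^{p^m+1})+\Tra_1^n(\be x^{p^k+1})}.
\]
For part (i), I expand the $N$-th power as a sum over $(x_1,\dots,x_N)\in\bF_q^N$ and, using $\Tra_1^m\circ\Tra_m^{?}$-type identities, rewrite the $\al$-exponent as $\Tra_1^m(\al(\sum x_j^{p^m+1}))$ and the $\be$-exponent as $\Tra_1^n(\be(\sum x_j^{p^k+1}))$. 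Summing over $\al\in\bF_{p^m}$ and $\be\in\bF_q$ via orthogonality of additive characters forces, for $N=1$, the two conditions $x_1^{p^m+1}=0$ and $x_1^{p^k+1}=0$, i.e. $x_1=0$; the count is $p^{3m}$ (one choice of $x$, times $|\bF_{p^m}|\cdot|\bF_q|=p^m\cdot p^{2m}$). This gives (i).

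For (ii), the same device with $N=2$ reduces the second moment to $p^{3m}$ times the number of solutions $(x_1,x_2)\in\bF_q^2$ of the system
\[
x_1^{p^m+1}+x_2^{p^m+1}=0\ \text{ in }\bF_{p^m},\qquad x_1^{p^k+1}+x_2^{p^k+1}=0\ \text{ in }\bF_q.
\]
The first equation is $\Tra_m^m$-trivial so it literally says $x_1^{p^m+1}=-x_2^{p^m+1}$; writing $x_i^{p^m+1}=N_{\bF_q/\bF_{p^m}}(x_i)$, this is a norm condition. The second equation, after substituting $x_1=u x_2$ (handling $x_2=0$, which forces $x_1=0$, separately), becomes $x_2^{p^k+1}(u^{p^k+1}+1)=0$, so either $x_2=0$ or $u^{p^k+1}=-1$; simultaneously the first equation becomes $N(x_2)(N(u)+1)=0$, i.e. $N(u)=-1$ (equivalently $u^{p^m+1}=-1$). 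Thus I must count $u\in\bF_q^*$ with $u^{p^k+1}=-1$ and $u^{p^m+1}=-1$, then multiply by $q-1$ choices of $x_2$ (and add the single solution $x_1=x_2=0$). The count of such $u$ is governed by $\gcd(p^k+1,p^m+1,q-1)$ and the solvability of $u^{p^k+1}=-1$, which is exactly where the dichotomy $d'=d$ versus $d'=2d$ and the congruence $p^d\bmod 4$ enter — this is the key computation, and it should reproduce the three stated cases once one knows that $u^{p^k+1}=-1$ has solutions iff the relevant power of $-1$ lands in the image, and that in the $d'=2d$ case extra solutions appear. Carrying out the gcd bookkeeping carefully is the main obstacle; everything else is orthogonality.

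For (iii), assuming $d'=d$, I again unfold the third power and reduce to $p^{3m}$ times the number of $(x_1,x_2,x_3)\in\bF_q^3$ with
\[
x_1^{p^m+1}+x_2^{p^m+1}+x_3^{p^m+1}=0,\qquad x_1^{p^k+1}+x_2^{p^k+1}+x_3^{p^k+1}=0.
\]
The cleanest route is to reinterpret this count using the already-computed lower moments together with the quadratic-form description from Lemma \ref{qua} and the rank data in Lemma \ref{rank}: namely, $\sum_{\al,\be}T(\al,\be)^3 = \sum_{\al,\be}T(\al,\be)\cdot|T(\al,\be)|^2$ when $T$ is suitably real, or more robustly one writes $\sum T^3$ as a weighted sum of $\eta_0(\Delta_{\al,\be})q_0^{s-r_{\al,\be}/2}$-type terms and uses the moment identities (i),(ii) plus $n_1=p^{m-d}(p^n-1)$ (from Lemma \ref{rank}) to pin down the remaining unknown frequencies. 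Alternatively, and perhaps more directly, I solve the system above: the first equation is over $\bF_{p^m}$, the second over $\bF_q$; projecting the second via $\Tra_m^n$ and comparing forces alignment, after which a change of variables as in (ii) reduces it to counting points on a pair of Fermat-type/Artin–Schreier curves, whose point count in the $d'=d$ regime gives $p^{n+d}+p^n-p^d$. The anticipated difficulty is the same as in (ii): controlling the gcd/solvability combinatorics for $u^{p^k+1}$ and $u^{p^m+1}$, now over a two-parameter family; the hypothesis $d'=d$ is what keeps this tractable and matches the formula.
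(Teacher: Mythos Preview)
Your reduction for (i) and (ii) by orthogonality is exactly what the paper does. For (ii), however, you stop at ``gcd bookkeeping'' precisely where the decisive step lies: from $u^{p^m+1}=u^{p^k+1}=-1$ one divides to get $u^{p^m-p^k}=1$, hence $u^{p^{m-k}-1}=1$, and since $\gcd(m-k,n)=d'$ this forces $u\in\bF_{p^{d'}}^*$. Once $u$ sits in that small subfield, both conditions collapse to a single equation: $u^2=-1$ when $d'=d$ (zero or two solutions according to $p^d\bmod 4$), and $u^{p^d+1}=-1$ when $d'=2d$ (exactly $p^d+1$ solutions, since $m/d,k/d$ are both odd). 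The three cases then fall out in one line; your generic gcd formulation does not isolate this.

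For (iii) there is a genuine gap. Your ``cleanest route'' is circular: in the paper's scheme the four unknowns $n_{0,\pm1},n_{2,\pm1}$ are determined by four linear relations coming from the total count, moment (i), moment (ii), and moment (iii), after $n_{1,\pm1}$ has been fixed via Lemma~\ref{rank}. Dropping (iii) leaves only one relation between $n_{0,1}-n_{0,-1}$ and $n_{2,1}-n_{2,-1}$ (namely the one from moment (i)), whereas $\sum T(\al,\be)^3$ contributes the independent combination $(n_{0,1}-n_{0,-1})+p^{3d}(n_{2,1}-n_{2,-1})$; you cannot recover the latter from the former. So (iii) is an input to the rank distribution, not a consequence of it. The paper itself does not prove (iii) in place but cites \cite{Zen Li}, Prop.~4(iii), where the direct count of $(x_1,x_2,x_3)$ is carried out; your alternative ``direct'' sketch points in that direction but ``reduces to a pair of Fermat-type/Artin--Schreier curves'' is not yet a proof---the actual argument requires a careful case split on which of the $x_i$ vanish and a substitution analogous to the $u\in\bF_{p^{d'}}^*$ trick above.
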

\begin{proof}
see  \textbf{Appendix A.}
\end{proof}

In the case $d'=2d$, we could determine the explicit values of
$T(\al,\be)$. To this end we will study a class of Artin-Schreier
curves. A similar technique has been applied in Coulter \cite{Coul},
Theorem 6.1.

\begin{lemma}\label{Artin}Suppose $(\al,\be)\in (\bF_{p^m}\times
\bF_q)\big{\backslash}\{0,0\}$ and $d'=2d$. Let $N$ be the number of
$\bF_q$-rational (affine) points on the curve
\begin{equation}\label{Artin Sch}
\frac{1}{2}\al x^{p^m+1}+\be x^{p^k+1}=y^{p^d}-y.
\end{equation}
 Then
\[N=q+(p^d-1)\cdot T(\al,\be).\]
\end{lemma}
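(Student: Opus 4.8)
The plan is to count the affine $\bF_q$-rational points on the curve
$\frac{1}{2}\al x^{p^m+1}+\be x^{p^k+1}=y^{p^d}-y$ by summing over $x\in\bF_q$ the number of $y\in\bF_q$ solving the equation, and then to recognize the answer as an additive-character sum that collapses to $T(\al,\be)$. First I would fix $x\in\bF_q$ and ask, for a given $u=\frac{1}{2}\al x^{p^m+1}+\be x^{p^k+1}\in\bF_q$, how many $y\in\bF_q$ satisfy $y^{p^d}-y=u$. The map $y\mapsto y^{p^d}-y$ is $\bF_{q_0}$-linear (here $q_0=p^d$) with kernel $\bF_{q_0}$, so its image is the hyperplane $\{u\in\bF_q:\Tra_d^n(u)=0\}$ and each fibre has exactly $p^d$ points when it is nonempty. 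Hence the number of $y$ is $p^d$ if $\Tra_d^n(u)=0$ and $0$ otherwise, which I would write via the indicator
\[
\#\{y\in\bF_q: y^{p^d}-y=u\}=\sum_{\delta\in\bF_{q_0}}\zeta_p^{\Tra_1^d(\delta\,\Tra_d^n(u))}.
\]

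Summing over $x$ gives
\[
N=\sum_{x\in\bF_q}\sum_{\delta\in\bF_{q_0}}\zeta_p^{\Tra_1^d\!\left(\delta\,\Tra_d^n\!\left(\frac{1}{2}\al x^{p^m+1}+\be x^{p^k+1}\right)\right)}.
\]
The term $\delta=0$ contributes $q$. For $\delta\in\bF_{q_0}^*$ I would use transitivity of the trace, $\Tra_1^d\circ\Tra_d^n=\Tra_1^n$ and $\Tra_1^d\circ\Tra_d^m=\Tra_1^m$, together with $\delta\in\bF_{q_0}\subseteq\bF_{p^m}$ (valid since $d\mid m$), to rewrite the exponent as $\Tra_1^m\!\left(\tfrac{1}{2}\delta\al\,x^{p^m+1}\right)+\Tra_1^n\!\left(\delta\be\,x^{p^k+1}\right)$; note that since $d'=2d$ here, $\frac{1}{2}\delta\al\,x^{p^m+1}$ genuinely lands in $\bF_{p^m}$ because $x^{p^m+1}\in\bF_{p^m}$ and $\delta\al\in\bF_{p^m}$. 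Thus each such inner sum is exactly $T(\tfrac{1}{2}\delta\al,\delta\be)$, and I would need the fact that $T(\tfrac{1}{2}\delta\al,\delta\be)=T(\al,\be)$ for all $\delta\in\bF_{q_0}^*$. This substitution invariance is the crux: replacing $x$ by $cx$ for a suitable $c\in\bF_q^*$ rescales $x^{p^m+1}$ by $c^{p^m+1}$ and $x^{p^k+1}$ by $c^{p^k+1}$, and one must show that for each $\delta\in\bF_{q_0}^*$ there is $c\in\bF_q^*$ with $c^{p^m+1}=\delta$ and $c^{p^k+1}=\delta$ simultaneously — equivalently $c^{p^m+1}=c^{p^k+1}=\delta$, i.e. $c^{p^m-p^k}=1$ and $c^{p^k+1}=\delta$. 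I expect this to follow from the arithmetic relation $d'=\gcd(m+k,2k)=2d$ governing the relevant gcd's of exponents with $q-1$; concretely, $\gcd(p^k+1,q-1)$ divides $q_0-1=p^d-1$-related quantities in a way that makes the norm-type map $c\mapsto c^{p^k+1}$ surject onto, or at least hit all of, $\bF_{q_0}^*$ while keeping $c^{p^m-p^k}=1$. Assembling the pieces yields $N=q+(p^d-1)\,T(\al,\be)$.

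The main obstacle will be the substitution-invariance step $T(\tfrac{1}{2}\delta\al,\delta\be)=T(\al,\be)$ for $\delta\in\bF_{q_0}^*$: one has to pin down exactly which $c\in\bF_q^*$ effect the simultaneous rescaling and verify, using $d'=2d$ (so that $m/d$ and $k/d$ are both odd by \eqref{rel d d'}), that such $c$ exist for every $\delta\in\bF_{q_0}^*$ and that the substitution $x\mapsto cx$ in the defining sum of $T$ is legitimate (it permutes $\bF_q$, so it is). A clean way to organize this is: show $\bF_{q_0}^*\subseteq\{c^{p^k+1}:c^{p^m-p^k}=1\}$ by a counting/divisibility argument on the orders of the relevant cyclic subgroups of $\bF_q^*$, the parity of $m/d$ and $k/d$ entering precisely here. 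Everything else — the linear-algebra count of the Artin–Schreier fibres, the trace-transitivity bookkeeping, and the final tally — is routine. Alternatively, if the direct substitution argument is awkward, one can instead prove the identity $\sum_{\delta\in\bF_{q_0}^*}T(\tfrac{1}{2}\delta\al,\delta\be)=(p^d-1)T(\al,\be)$ in aggregate, which is all that is actually needed, and this averaged form may be easier to extract from the moment identities in Lemma \ref{moment} or from the rank stratification in Lemma \ref{rank} in the $d'=2d$ case.
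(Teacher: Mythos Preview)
Your character-sum reduction is exactly the paper's route: express $N$ as a double sum, peel off the $\delta=0$ term giving $q$, and collapse the rest to $\sum_{\delta\in\bF_{q_0}^*}T(\delta\al,\delta\be)$. One small slip: when you push $\Tra_1^d\bigl(\delta\,\Tra_d^n(\tfrac{1}{2}\al x^{p^m+1})\bigr)$ through the trace identities, the $\tfrac{1}{2}$ is cancelled by the factor $2$ coming from $\Tra_m^n$ restricted to $\bF_{p^m}$, so the inner sum is $T(\delta\al,\delta\be)$, not $T(\tfrac{1}{2}\delta\al,\delta\be)$.

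The only real divergence is in the step you flag as the crux, the invariance $T(\delta\al,\delta\be)=T(\al,\be)$ for $\delta\in\bF_{q_0}^*$. The paper dispatches it in one line via the quadratic-form setup already in place: since $H_{\delta\al,\delta\be}=\delta\,H_{\al,\be}$, Lemma~\ref{qua}(i) gives $T(\delta\al,\delta\be)=\eta_0(\delta)^{r_{\al,\be}}T(\al,\be)$, and when $d'=2d$ the rank $r_{\al,\be}\in\{s,s-2,s-4\}$ is always even by Lemma~\ref{rank}(ii), so the sign is $+1$. Your substitution $x\mapsto cx$ also works and is not as delicate as you fear: since $m/d$ and $k/d$ are both odd and $2d\mid n$, take $c\in\bF_{p^{2d}}^*\subset\bF_q^*$; then $c^{p^m+1}=c^{p^k+1}=c^{p^d+1}$, and the norm $\bF_{p^{2d}}^*\to\bF_{p^d}^*$, $c\mapsto c^{p^d+1}$, is surjective, so every $\delta^{-1}\in\bF_{q_0}^*$ is realized. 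Either argument yields $N=q+(p^d-1)T(\al,\be)$; the paper's is shorter because the rank-parity fact is already on the shelf, while yours is self-contained and avoids the quadratic-form machinery altogether.
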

\begin{proof}
see \textbf{Appendix A.}
\end{proof}

Now we give an explicit evaluation of $T(\al,\be)$ in the case
$d'=2d$.
\begin{lemma}\label{reduce num}
Assumptions as in Lemma \ref{Artin}, then
\[
T(\al,\be)=\left\{
\begin{array}{ll}
-p^m, &\text{if}\; r_{\al,\be}=s\\[2mm]
p^{m+d}, &\text{if}\; r_{\al,\be}=s-2\\[2mm]
-p^{m+2d}, &\text{if}\; r_{\al,\be}=s-4
\end{array}
\right.
\]
\end{lemma}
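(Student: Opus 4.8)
The goal is to pin down the three possible values of $T(\al,\be)$ in the case $d'=2d$, where by Lemma \ref{rank}(ii) the rank $r_{\al,\be}$ of the quadratic form $H_{\al,\be}$ is one of $s$, $s-2$, $s-4$. By Lemma \ref{qua}(i), for a quadratic form over $\bF_{q_0}$ (with $q_0=p^d$) of rank $r$ we have $T(\al,\be)=\veps\cdot\eta_0(\Delta)\,q_0^{s-r/2}$, where $\veps=1$ if $q_0\equiv1\pmod4$ and $\veps=i^r$ if $q_0\equiv3\pmod4$, and $\eta_0(\Delta)=\pm1$ is the sign invariant. Since $q_0^{s}=q=p^n$ and $q_0^{1/2}=p^{d/2}$ is generally not an integer while $T(\al,\be)$ is an algebraic integer in $\Z[\zeta_p]$, the very shape of the answer already forces constraints: $q_0^{s-r/2}=p^{n-rd/2}$, so for $r=s$ we get magnitude $p^{n/2}=p^m$, for $r=s-2$ we get $p^{m+d}$, for $r=s-4$ we get $p^{m+2d}$. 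So the magnitudes are already correct; what remains is to determine the sign (including the factor $\veps$) in each case, i.e. to show the sign is $-1,+1,-1$ respectively.

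The cleanest route is to use Lemma \ref{Artin}: the number $N$ of $\bF_q$-rational affine points on the Artin--Schreier curve $\tfrac12\al x^{p^m+1}+\be x^{p^k+1}=y^{p^d}-y$ satisfies $N=q+(p^d-1)T(\al,\be)$. First I would observe that $N$ is a non-negative integer and, more importantly, that the curve is an Artin--Schreier cover of the affine line (in the $x$-coordinate, or of the plane curve $z=\tfrac12\al x^{p^m+1}+\be x^{p^k+1}$), so by the Hasse--Weil bound or by a direct genus computation $|N-q|$ is bounded by (something like) $2g\sqrt q$ with $g$ controlled by the degrees $p^m+1$, $p^k+1$. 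Combined with $T(\al,\be)=\veps\eta_0(\Delta)p^{n-rd/2}$, this rules out nothing yet on its own, so the Hasse--Weil bound alone is too weak when $r=s-4$; instead I would extract the exact value of $T$ from the sign invariant $\eta_0(\Delta)$ together with a parity/reciprocity argument.

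The sharper approach: I would compute the sign directly. Recall $d'=2d$ happens exactly when $m/d$ and $k/d$ are both odd (by (\ref{rel d d'})), so in particular $p^{m/d}$ and $p^{k/d}$ behave uniformly mod $4$. I would use the second moment from Lemma \ref{moment}(ii), $\sum_{\al,\be}T(\al,\be)^2=(p^{n+d}+p^n-p^d)p^{3m}$, together with the first moment $\sum_{\al,\be}T(\al,\be)=p^{3m}$ and the counting of ranks. Write $n_0',n_2',n_4'$ for the number of $(\al,\be)\in\bF_{p^m}\times\bF_q$ (including $(0,0)$, which contributes $T=q$, $r=0$... actually $(0,0)$ gives $T=q=p^n$) with $r_{\al,\be}=s,s-2,s-4$; separate out $(0,0)$. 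On the rank-$s$ locus $T=\delta_0 p^m$ with $\delta_0\in\{\pm1\}$ a priori possibly depending on $(\al,\be)$, similarly $T=\delta_2 p^{m+d}$ and $T=\delta_4 p^{m+2d}$ on the other loci. The key claim is that these signs are constant ($-1,+1,-1$); to see the value is constant I would argue that $\eta_0(\Delta_{\al,\be})$ varies continuously in a suitable sense, or more robustly: plug the magnitudes into $N=q+(p^d-1)T$ and use that $N\ge 0$ together with the divisibility $p^d-1\mid N-q$ — but the decisive input is to combine $\sum T=p^{3m}$, $\sum T^2 = (p^{n+d}+p^n-p^d)p^{3m}$, and $\sum T^3=(p^{n+d}+p^n-p^d)p^{3m}$ (the latter only in case $d'=d$, so unavailable here) — hence instead I would directly evaluate one representative case, e.g. $\al=0$ (so $F_{0,\be}(X)=\Tra_d^n(\be x^{p^k+1})$) where the quadratic form and its discriminant can be computed explicitly via Gauss sums over $\bF_q$, obtaining $T(0,\be)=-p^m$ or $p^{m+d}$ and identifying $\eta_0(\Delta)$ in terms of $\eta(\be)$-type characters; since the sign invariant $\eta_0(\Delta)$ depends only on $r_{\al,\be}$ up to the constant $\veps$-factor pattern dictated by $q_0\bmod 4$ and the evenness of $s$, this forces the uniform answer.

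**Main obstacle.** The crux is showing the sign is genuinely \emph{constant} on each rank stratum, i.e. that $\veps\cdot\eta_0(\Delta_{\al,\be})$ does not jump between $+1$ and $-1$ as $(\al,\be)$ ranges over a fixed stratum. The Artin--Schreier curve count (Lemma \ref{Artin}) gives $T$ up to the ambiguity of $N$, and the Hasse--Weil bound is not by itself sharp enough at rank $s-4$; the moment identities (Lemma \ref{moment}) constrain the multiset of values but, without a third moment in the $d'=2d$ case, do not immediately isolate the signs either. I expect the real work is to show that $\eta_0(\Delta_{\al,\be})$ — equivalently the Hasse--Witt invariant type data of $H_{\al,\be}$ — is forced by $r_{\al,\be}$ alone together with the global normalization from the first two moments; concretely, one writes $\sum_{r=s-i}T = (\text{signed count})$, and a short linear-algebra argument using $n_1=0$ here (since $d'=2d$), the value of $n_0',n_2',n_4'$, and the two moment equations forces $\delta_0=-1,\delta_2=+1,\delta_4=-1$ with no sign fluctuation possible. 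That bookkeeping — solving the resulting small linear system over $\Z$ and checking it has a unique sign-consistent solution — is where I would spend the effort, and it is carried out in Appendix A.
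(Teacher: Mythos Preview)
You have correctly identified the crux: once Lemma~\ref{qua}(i) tells you that $T(\al,\be)=\pm p^m$, $\pm p^{m+d}$, or $\pm p^{m+2d}$ according as $r_{\al,\be}=s,s-2,s-4$, the whole task is to nail down the sign. But none of the three mechanisms you propose actually closes the gap. The moment identities give you only two usable equations (first and second moments; the third moment of Lemma~\ref{moment}(iii) is stated only for $d'=d$), while if the sign is allowed to vary within a stratum you have six unknowns $n_{i,\pm}$; the rank counts $n_0,n_2,n_4$ are not known in advance (they are determined \emph{after} this lemma, in the proof of Theorem~\ref{value dis T}, precisely by using $n_{0,1}=n_{2,-1}=n_{4,1}=0$). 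Your fallback, computing a representative such as $\al=0$, cannot establish constancy of the sign across the stratum, and in any case the $\al=0$ locus only meets the strata $r=s$ and $r=s-2$, never $r=s-4$. Finally, the assertion that ``$\eta_0(\Delta_{\al,\be})$ depends only on $r_{\al,\be}$'' is exactly the statement to be proved, not something one can invoke.

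The paper's argument is short and bypasses all of this by working pointwise. From Lemma~\ref{Artin}, $N=q+(p^d-1)T(\al,\be)$ where $N$ is the number of affine $\bF_q$-points on $\tfrac12\al x^{p^m+1}+\be x^{p^k+1}=y^{p^d}-y$. Since $m/d$ and $k/d$ are both odd (this is exactly the hypothesis $d'=2d$), one has $p^m+1\equiv p^k+1\equiv p^d+1\pmod{p^{2d}-1}$, so $(x,y)\mapsto(tx,t^{p^d+1}y)$ for $t\in\bF_{p^{2d}}^*$ acts on the curve. The $x=0$ fibre contributes exactly the $p^d$ points with $y\in\bF_{p^d}$, and every point with $x\neq0$ lies in a free $\bF_{p^{2d}}^*$-orbit; hence $N\equiv p^d\pmod{p^{2d}-1}$, which rearranges to $T(\al,\be)\equiv1\pmod{p^d+1}$. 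Now $p^d\equiv-1\pmod{p^d+1}$ and $m/d$ is odd, so $p^{m+jd}\equiv(-1)^{m/d+j}=(-1)^{1+j}\pmod{p^d+1}$; thus among $\pm p^{m+jd}$ only $(-1)^j p^{m+jd}$ is $\equiv1\pmod{p^d+1}$, giving $T=-p^m,\,p^{m+d},\,-p^{m+2d}$ for $j=0,1,2$. This congruence determines the sign for every individual $(\al,\be)$, so no constancy argument or moment bookkeeping is needed.
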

\begin{proof}
Consider the $\bF_q$-rational (affine) points on the Artin-Schreier
curve in Lemma \ref{Artin}. It is easy to verify that $(0,y)$ with
$y\in \bF_{p^d}$ are exactly the points on the curve with $x=0$. If
$(x,y)$ with $x\neq 0$ is a point on this curve, then so are $(t
x,t^{p^d+1}y)$ with $t^{p^{2d}-1}=1$ (note that $p^m+1\equiv
p^k+1\equiv p^d+1\pmod {p^{2d}-1}$ since $m/d$ and $k/d$ are both
odd by (\ref{rel d d'})). In total, we have
\[q+(p^d-1)T(\al,\be)=N\equiv p^d\pmod {p^{2d}-1}\]
which yields
\[T(\al,\be)\equiv 1\pmod{p^d+1}.\]

 We only consider the case $r_{\al,\be}=s$. The other cases are similar. In this case $T(\al,\be)=\pm
 p^m$. Assume $T(\al,\be)=p^m$.
 Then ${p^d+1}\mid p^m-1$ which contradicts to $m/d$ is odd. Therefore $T(\al,\be)=-p^m$.
\end{proof}
\begin{remark}
(i). Our treatment improve the technique in \cite{Coul}. Otherwise
the case $(p,d)=(3,1)$ will be excluded.

(ii). Applying Lemma \ref{reduce num} to Lemma \ref{Artin}, we could
determine the number of rational points on the curve (\ref{Artin
Sch}).
\end{remark}

\section{Exponential Sums $T(\al,\be)$ and Cyclic Code $\cC_1$}

\quad Recall $q_0^*=(-1)^{\frac{q_0-1}{2}}q_0$.
 In this section we prove the following results.

\begin{theo}\label{value dis T}
The value distribution of the multi-set
$\left\{T(\al,\be)\left|\al,\be\in \bF_q\right.\right\}$ is shown as
following.

 (i). For the case $d'=d$,
\begin{center}
\begin{tabular}{|c|c|}
\hline
values & multiplicity \\[2mm]
\hline $p^{m}$&$p^d(p^m-1)(p^m+1)^2\big{/}\left(2(p^d+1)\right)$
\\[2mm]
\hline
$-p^{m}$&$p^d(p^m-1)(p^{n}-2p^{n-d}+1)\big{/}\left({2(p^d-1)}\right)$
\\[2mm]
\hline

$\sqrt{{q_0}^*}{q_0}^{\frac{s}{2}},
-\sqrt{{q_0}^*}{q_0}^{\frac{s}{2}}$&$\frac{1}{2}p^{m-d}(p^{n}-1)$
\\[2mm]
\hline
$-p^{m+d}$&$(p^{m-d}-1)(p^{n}-1)\big{/}\left(p^{2d}-1\right)$\\[2mm]
\hline $p^n$&$1$
\\[2mm]
\hline
\end{tabular}
\end{center}

(ii). For the case $d'=2d$,
\begin{center}
\begin{tabular}{|c|c|}
\hline
values & multiplicity \\[2mm]
\hline
$-p^m$&$p^{3d}(p^m-1)(p^{n}-p^{n-2d}-p^{n-3d}+p^m-p^{m-d}+1)\big{/}\left((p^d+1)(p^{2d}-1)\right)$
\\[2mm]
\hline
${p}^{m+d}$&$p^{d}(p^n-1)(p^m+p^{m-d}+p^{m-2d}+1)\big{/}(p^d+1)^2$
\\[2mm]
\hline
 $-{p}^{m+2d}$&$
(p^{m-d}-1)(p^{n}-1)\big{/}\left((p^d+1)(p^{2d}-1)\right)$
\\[2mm]

\hline $p^n$&$1$
\\[2mm]
\hline
\end{tabular}
\end{center}
\end{theo}

\begin{proof}
see \textbf{Appendix B.}
\end{proof}
Recall that $t$ is a divisor of $d$ and $\cC_1$ is the cyclic code
over $\bF_{p^t}$ with parity-check polynomial $h_2(x)h_3(x)$ where
$h_2(x)$ and $h_3(x)$ are the minimal polynomials of
$\pi^{-(p^k+1)}$ and $\pi^{-(p^m+1)}$, respectively.
\begin{theo}\label{wei dis C1}
For $k\neq m$, the weight distribution $\{A_0,A_1,\cdots,A_l\}$ of
the cyclic code $\cC_1$ over $\bF_{p^t}$ ($p\geq 3$) with length
$l=q-1$ and $\mathrm{dim}_{\bF_{p^t}}\cC_1=3n_0/2$ is shown as
following.

(i). For the case $d'=d$ and $d/t$ is odd,
\begin{center}
\begin{tabular}{|c|c|}
\hline
$i$ & $A_i$ \\[2mm]
\hline
$(p^t-1)(p^{n-t}-p^{m-t})$&$p^d(p^m-1)(p^m+1)^2\big{/}\left(2(p^d+1)\right)$
\\[2mm]
\hline

$(p^t-1)p^{n-t}$&$p^{m-d}(p^{n}-1)$
\\[2mm]
\hline
$(p^t-1)(p^{n-t}+p^{m-t})$&$p^d(p^m-1)(p^{n}-2p^{n-d}+1)\big{/}\left({2(p^d-1)}\right)$
\\[2mm]
\hline
$(p^t-1)(p^{n-t}+p^{m+d-t})$&$(p^{m-d}-1)(p^{n}-1)\big{/}\left(p^{2d}-1\right)$\\[2mm]
\hline $0$&$1$
\\[2mm]
\hline
\end{tabular}
\end{center}

(ii). For the case $d'=d$ and $d/t$ is even,
\begin{center}
\begin{tabular}{|c|c|}
\hline
$i$ & $A_i$ \\[2mm]
\hline

$(p^t-1)(p^{n-t}-p^{m+\frac{d}{2}-t})$&$\frac{1}{2}p^{m-d}(p^{n}-1)$
\\[2mm]
\hline
$(p^t-1)(p^{n-t}-p^{m-t})$&$p^d(p^m-1)(p^m+1)^2\big{/}\left(2(p^d+1)\right)$
\\[2mm]
\hline
$(p^t-1)(p^{n-t}+p^{m-t})$&$p^d(p^m-1)(p^{n}-2p^{n-d}+1)\big{/}\left({2(p^d-1)}\right)$
\\[2mm]
\hline
$(p^t-1)(p^{n-t}+p^{m+\frac{d}{2}-t})$&$\frac{1}{2}p^{m-d}(p^{n}-1)$
\\[2mm]
\hline
$(p^t-1)(p^{n-t}+p^{m+d-t})$&$(p^{m-d}-1)(p^{n}-1)\big{/}\left(p^{2d}-1\right)$\\[2mm]
\hline $0$&$1$
\\[2mm]
\hline
\end{tabular}
\end{center}

(iii). For the case $d'=2d$,
\begin{center}
\begin{tabular}{|c|c|}
\hline
$i$ & $A_i$ \\[2mm]
\hline
$(p^t-1)(p^{n-t}-p^{m+d-t})$&$p^{d}(p^n-1)(p^m+p^{m-d}+p^{m-2d}+1)\big{/}(p^d+1)^2$
\\[2mm]
\hline
$(p^t-1)(p^{n-t}+p^{m-t})$&$p^{3d}(p^m-1)(p^{n}-p^{n-2d}-p^{n-3d}+p^m-p^{m-d}+1)\big{/}\left((p^d+1)(p^{2d}-1)\right)$
\\[2mm]
\hline

 $(p^t-1)(p^{n-t}+p^{m+2d-t})$&$
(p^{m-d}-1)(p^{n}-1)\big{/}\left((p^d+1)(p^{2d}-1)\right)$
\\[2mm]

\hline $0$&$1$
\\[2mm]
\hline
\end{tabular}
\end{center}
\end{theo}
\begin{proof}
see \textbf{Appendix B.}
\end{proof}
\begin{remark}
\begin{itemize}
    \item[(1).] In the case $d=d'$. Since $\gcd(p^m+1,
p^k+1)=2$, the first $l'=\frac{q-1}{2}$ coordinates of each codeword
of $\cC_1$ form a cyclic code $\cC_1'$ over $\bF_{p^t}$ with length
$l'$ and dimension $3n_0/2$. Let $(A_0',\cdots,A_{l'}')$ be the
weight distribution of $\cC_1'$, then $A_i'=A_{2i}$ $(0\leq i\leq
l')$.
    \item[(2).] In the case $d'=2d$. Since $\gcd(p^m+1,
p^k+1)=p^d+1$, the first $l'=\frac{q-1}{p^d+1}$ coordinates of each
codeword of $\cC_1$ form a cyclic code $\cC_1'$ over $\bF_{p^t}$
with length $l'$ and dimension $3n_0/2$. Let $(A_0',\cdots,A_{l'}')$
be the weight distribution of $\cC_1'$, then $A_i'=A_{(p^d+1)i}$
$(0\leq i\leq l')$.
\item[(2).] If $k=0$, this result is the same as \cite{Luo Fen},
    Theorem 3.
\end{itemize}

\end{remark}

\section{Results on Correlation Distribution of Sequences and Cyclic Code $\cC_2$}

\quad Recall $\phi_{\al,\be}(x)$ in the proof of Lemma \ref{rank}
and $N_{i,\veps}$ in the proof of Theorem \ref{value dis T}. Finally
we will determine the value distribution of $S(\al,\be,\ga)$, the
correlation distribution among sequences in $\calF$ defined in
(\ref{def F}) and the weight distribution of $\cC_2$ defined in
Section 1. The following result will play an important role.
\begin{lemma}\label{last lemma}
Let $t$ be a divisor of $d$. For any $a\in \bF_{p^t}$ and any
$(\al,\be)\in N_{i,\veps}$ with $\veps=\pm 1$ and $0\leq i\leq 4$,
then the number of elements $\ga\in \bF_q$ satisfying
\begin{itemize}
    \item[(i).] $\phi_{\al,\be}(x)+\ga=0$ is solvable(choose one solution, say
    $x_0$),
    \item[(ii).]  $\Tra_t^m (\alpha
x_0^{p^{m}+1})+\Tra_t^n(\beta x_0^{p^k+1})=a$
\end{itemize}
is
\[
\left\{
    \begin{array}{ll}
    p^{n-id-t}&\text{if}\; s-i\; \text{and}\; d/t\;\text{are both
    odd},\;\text{and}\; a=0,\\[2mm]
    p^{n-id-t}+\veps \eta'(a)p^{\frac{n-id-t}{2}}&\text{if}\; s-i\; \text{and}\; d/t\;\text{are both
    odd},\;\text{and}\; a\neq 0,\\[2mm]
    p^{n-id-t}+
    \veps (p^t-1)p^{\frac{n-id}{2}-t}&\text{if}\; s-i\; \text{or}\; d/t\;\text{is
    even},\;\text{and}\; a=0,\\[2mm]
    p^{n-id-t}-\veps p^{\frac{n-id}{2}-t}&\text{if}\; s-i\; \text{or}\; d/t\;\text{is
    even},\;\text{and}\; a\neq 0.
    \end{array}
\right.
\]
where $\eta'$ is the quadratic (multiplicative) character on
$\bF_{p^t}$.
\end{lemma}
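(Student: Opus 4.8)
My plan is to recast conditions (i)--(ii) as a point count on the affine quadric $\Tra_t^d(f_{\al,\be})=a$ over $\bF_q$, and then evaluate that count by a Fourier argument together with Lemma~\ref{qua}. Recall from the proof of Lemma~\ref{rank} that $\phi_{\al,\be}\colon\bF_q\to\bF_q$ is the $\bF_{q_0}$-linearized polynomial attached to $f_{\al,\be}(x)=\Tra_d^m(\al x^{p^m+1})+\Tra_d^n(\be x^{p^k+1})$, characterized by
\[f_{\al,\be}(x+z)=f_{\al,\be}(x)+f_{\al,\be}(z)+\Tra_d^n\!\big(\phi_{\al,\be}(x)\,z\big),\]
so $\ker\phi_{\al,\be}$ is the radical of the polarization of $f_{\al,\be}$; hence $\dim_{\bF_{q_0}}\ker\phi_{\al,\be}=s-r_{\al,\be}=i$, $|\ker\phi_{\al,\be}|=q_0^{\,i}=p^{id}$, and condition (i) says exactly $\gamma\in\Im\phi_{\al,\be}$. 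Moreover (ii) is independent of the solution $x_0$ chosen: since $p$ is odd, $f_{\al,\be}$ vanishes on the radical (it is additive there, and $4f_{\al,\be}(v)=f_{\al,\be}(2v)=2f_{\al,\be}(v)$ forces $f_{\al,\be}(v)=0$), and $\Tra_d^n(\phi_{\al,\be}(x_0)v)=0$ for $v\in\ker\phi_{\al,\be}$, so $f_{\al,\be}(x_0+v)=f_{\al,\be}(x_0)$ for all such $v$; as $\Tra_t^m(\al x_0^{p^m+1})+\Tra_t^n(\be x_0^{p^k+1})=\Tra_t^d(f_{\al,\be}(x_0))$ by transitivity of trace, this value depends only on $\phi_{\al,\be}(x_0)=-\gamma$. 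Counting each $x\in\bF_q$ against the fibre of $\phi_{\al,\be}$ containing it, with $M_a:=\#\{x\in\bF_q:\Tra_t^d(f_{\al,\be}(x))=a\}$,
\[\#\{\gamma\in\bF_q:\text{(i) and (ii) hold}\}=\frac{1}{|\ker\phi_{\al,\be}|}\,M_a=\frac{M_a}{p^{id}}.\]

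It remains to evaluate $M_a$. By orthogonality on $\bF_{p^t}$, using $\Tra_1^t\big(b\,\Tra_t^d(z)\big)=\Tra_1^d(bz)$ for $b\in\bF_{p^t}$, $z\in\bF_{q_0}$,
\[M_a=\frac1{p^t}\sum_{b\in\bF_{p^t}}\zeta_p^{-\Tra_1^t(ab)}\sum_{x\in\bF_q}\zeta_p^{\Tra_1^d(b\,f_{\al,\be}(x))}.\]
The $b=0$ term equals $p^n$. For $b\in\bF_{p^t}^*$ the inner sum is $\sum_{X\in\bF_{q_0}^s}\zeta_p^{\Tra_1^d(X(bH_{\al,\be})X^T)}$, and $bH_{\al,\be}$ has rank $r_{\al,\be}$ and discriminant $b^{r_{\al,\be}}\Delta$, so Lemma~\ref{qua}(i) gives that it equals $\eta_0(b)^{r_{\al,\be}}T(\al,\be)$ (the factor $i^{r_{\al,\be}}$ in the $q_0\equiv3\pmod 4$ case is common to $H_{\al,\be}$ and $bH_{\al,\be}$, hence cancels). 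Therefore
\[M_a=\frac1{p^t}\Big(p^n+T(\al,\be)\sum_{b\in\bF_{p^t}^*}\eta_0(b)^{r_{\al,\be}}\zeta_p^{-\Tra_1^t(ab)}\Big).\]
For $b\in\bF_{p^t}^*$ one has $\eta_0(b)=\eta'(b)^{d/t}$ (from $\tfrac{p^d-1}{2}\equiv\tfrac{d}{t}\cdot\tfrac{p^t-1}{2}\pmod 2$); since $s$ is even, $\eta_0(b)^{r_{\al,\be}}=\eta'(b)^{(s-i)(d/t)}$ is the trivial character of $\bF_{p^t}^*$ unless $s-i$ and $d/t$ are both odd, in which case it is $\eta'$.

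If $s-i$ or $d/t$ is even, the $b$-sum is $p^t-1$ for $a=0$ and $-1$ for $a\neq0$, while $T(\al,\be)=\veps\,p^{(n+id)/2}$ with $\veps=\pm1$ the sign recorded in the definition of $N_{i,\veps}$ and $(n+id)/2$ an integer in this regime; substituting and dividing by $p^{id}$ gives the third and fourth cases. If $s-i$ and $d/t$ are both odd (which forces $i=1$ and $d'=d$), the $b$-sum is a quadratic Gauss sum over $\bF_{p^t}$: it vanishes for $a=0$ (so $M_0=p^{n-t}$, hence $p^{n-id-t}$, the first case) and equals $\eta'(-1)\eta'(a)\,G'$ for $a\neq0$, where $G'=\sum_{b\in\bF_{p^t}^*}\eta'(b)\zeta_p^{\Tra_1^t(b)}$ satisfies $(G')^2=\eta'(-1)p^t$; here $T(\al,\be)=\veps\sqrt{q_0^*}\,q_0^{s/2}$, with $\sqrt{q_0^*}$ the quadratic Gauss sum over $\bF_{q_0}$. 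Since $d/t$ is odd, $\eta_0=\eta'\circ N_{\bF_{q_0}/\bF_{p^t}}$, so the Hasse--Davenport relation expresses $\sqrt{q_0^*}$ as a power of $G'$; substituting and using $(G')^2=\eta'(-1)p^t$ collapses $T(\al,\be)\,G'$ to $\veps\,p^{(n+id+t)/2}$, whence $M_a=p^{n-t}+\veps\eta'(a)p^{(n+id-t)/2}$ and, after dividing by $p^{id}$, the second case.

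The only genuinely delicate point is the Gauss-sum bookkeeping in the last case: one must track the sign in the Hasse--Davenport identity for $\bF_{q_0}/\bF_{p^t}$, the sign $\eta'(-1)$ produced by $(G')^2$, and the normalization of $\sqrt{q_0^*}$ used to record the value of $T(\al,\be)$ in Theorem~\ref{value dis T}, and verify that their product is $+1$, so that the clean coefficient $\veps\,\eta'(a)$ of the lemma emerges; this rests on the relations $\tfrac{p^d-1}{2}\equiv\tfrac{d}{t}\cdot\tfrac{p^t-1}{2}\pmod 2$ and $q_0^*=(p^{t*})^{d/t}$ valid when $d/t$ is odd. Everything else --- orthogonality, Lemma~\ref{qua}, and the exponent arithmetic in $n,m,d,t$ --- is routine.
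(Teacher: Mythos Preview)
Your proof is correct and follows essentially the same route as the paper: reduce the count to $M_a/p^{id}$ where $M_a$ counts $x\in\bF_q$ with $\Tra_t^d(f_{\al,\be}(x))=a$, evaluate $M_a$ by orthogonality on $\bF_{p^t}$, use the relation $T(\om\al,\om\be)=\eta_0(\om)^{r_{\al,\be}}T(\al,\be)$ to pull $T(\al,\be)$ out of the inner sum, and split into cases according to the parity of $(s-i)(d/t)$. The only cosmetic differences are that the paper verifies well-definedness via the matrix identity $\Delta X_0\cdot H_{\al,\be}=0$ rather than your radical argument, and in the odd--odd case it cites the explicit quadratic Gauss sum formulas from Lidl--Niederreiter directly rather than invoking Hasse--Davenport; your more explicit sign bookkeeping there is, if anything, a virtue.
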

\begin{proof}
see \textbf{Appendix C.}
\end{proof}

Let $p^*=(-1)^{\frac{p-1}{2}}p$ and $\left(\frac{\cdot}{p}\right)$
be the Legendre symbol.
 We are now ready to give the value
distribution of $S(\al,\be,\ga)$.

\begin{theo}\label{value dis S}
The value distribution of the multi-set
$\left\{S(\al,\be,\ga)\left|\al\in \bF_{p^m},(\be,\ga)\in
\bF_q^2\right.\right\}$ is shown as following.

 (i). If $d'=d$ is odd, then
\begin{center}
\begin{tabular}{|c|c|}
\hline
values & multiplicity \\[2mm]
\hline $
p^{m}$&$p^{m+d-1}(p^m+1)(p^{m}+p-1)(p^n-1)\big{/}\left(2(p^d+1)\right)$
\\[2mm]
\hline $
-p^{m}$&$p^{m+d-1}(p^m-1)(p^{m}-p+1)(p^n-2p^{n-d}+1)\big{/}\left(2(p^d-1)\right)$
\\[2mm]
\hline $ \zeta_p^j
p^{m}$&$p^{m+d-1}(p^n-1)^2\big{/}\left(2(p^d+1)\right)$
\\[2mm]
\hline $ -\zeta_p^j
p^{m}$&$p^{m+d-1}(p^n-1)(p^n-2p^{n-d}+1)\big{/}\left(2(p^d-1)\right)$
\\[2mm]

\hline
 $\veps\sqrt{{p}^*}{p}^{m+\frac{d-1}{2}}$&$\frac{1}{2}p^{3m-2d-1}(p^{n}-1)$
\\[2mm]
\hline
 $\veps\zeta_p^j\sqrt{{p}^*}{p}^{m+\frac{d-1}{2}}$&$\frac{1}{2}p^{n-\frac{3d+1}{2}}\left(p^{m-\frac{d+1}{2}}+\veps
\left(\frac{-j}{p}\right)\right)(p^{n}-1)$
\\[2mm]
\hline
$-p^{m+d}$&$p^{m-d-1}(p^{m-d}-1)(p^{n}-1)(p^{m-d}-p+1)\big{/}\left(p^{2d}-1\right)$\\[2mm]
\hline $- \zeta_p^j p^{m+d}$&$p^{m-d-1}(p^{n-2d}-1)(p^{n}-1)\big{/}\left(p^{2d}-1\right)$\\[2mm]
\hline $0$&$(p^{n}-1)(p^{3m-d}-p^{3m-2d}+p^{3m-3d}-p^{n-2d}+1)$
\\[2mm]
\hline $p^n$&$1$
\\[2mm]
\hline
\end{tabular}
\end{center}
where $\veps =\pm 1, 1\leq j\leq p-1$.

(ii). If $d'=d$ is even, then

\begin{center}
\begin{tabular}{|c|c|}
\hline
values & multiplicity \\[2mm]
\hline $
p^{m}$&$p^{m+d-1}(p^m+1)(p^{m}+p-1)(p^n-1)\big{/}\left(2(p^d+1)\right)$
\\[2mm]
\hline $
-p^{m}$&$p^{m+d-1}(p^m-1)(p^{m}-p+1)(p^n-2p^{n-d}+1)\big{/}\left(2(p^d-1)\right)$
\\[2mm]
\hline $ \zeta_p^j
p^{m}$&$p^{m+d-1}(p^n-1)^2\big{/}\left(2(p^d+1)\right)$
\\[2mm]
\hline $ -\zeta_p^j
p^{m}$&$p^{m+d-1}(p^n-1)(p^n-2p^{n-d}+1)\big{/}\left(2(p^d-1)\right)$
\\[2mm]

\hline
 $\veps{p}^{m+\frac{d}{2}}$&$\frac{1}{2}p^{n-\frac{3d}{2}-1}(p^{m-\frac{d}{2}}+\veps(p-1))(p^{n}-1)$
\\[2mm]
\hline
 $\veps\zeta_p^j{p}^{m+\frac{d}{2}}$&$\frac{1}{2}p^{n-\frac{3d}{2}-1}(p^{m-\frac{d}{2}}-\veps)(p^{n}-1)$
\\[2mm]
\hline
$-p^{m+d}$&$p^{m-d-1}(p^{m-d}-1)(p^{n}-1)(p^{m-d}-p+1)\big{/}\left(p^{2d}-1\right)$\\[2mm]
\hline
$- \zeta_p^j p^{m+d}
$&$p^{m-d-1}(p^{m-d}-1)(p^{n}-1)(p^{m-d}+1)\big{/}\left(p^{2d}-1\right)$\\[2mm]
\hline $0$&$(p^{n}-1)(p^{3m-d}-p^{3m-2d}+p^{3m-3d}-p^{n-2d}+1)$
\\[2mm]
\hline $p^n$&$1$
\\[2mm]
\hline
\end{tabular}
\end{center}
 where $\veps =\pm 1, 1\leq j\leq p-1$.

(iii). If $d'=2d$, then
\begin{center}
\begin{tabular}{|c|c|}
\hline
values & multiplicity \\[2mm]
\hline
$-p^m$&$p^{m+3d-1}(p^m-1)(p^{m}-p+1)(p^{n}-p^{n-2d}-p^{n-3d}+p^m-p^{m-d}+1)\big{/}(p^d+1)(p^{2d}-1)$
\\[2mm]
\hline $-\zeta_p^j
p^{m}$&$p^{m+3d-1}(p^{n}-1)(p^{n}-p^{n-2d}-p^{n-3d}+p^m-p^{m-d}+1)\big{/}(p^d+1)(p^{2d}-1)$
\\[2mm]
\hline
${p}^{m+d}$&$p^{m-1}(p^n-1)(p^{m-d}+p-1)(p^m+p^{m-d}+p^{m-2d}+1)\big{/}(p^d+1)^2$
\\[2mm]
\hline
$\zeta_p^j{p}^{m+d}$&$p^{m-1}(p^n-1)(p^{m-d}-1)(p^m+p^{m-d}+p^{m-2d}+1)\big{/}(p^d+1)^2$
\\[2mm]
\hline
 $-{p}^{m+2d}$&$
p^{m-2d-1}(p^{m-d}-1)(p^{m-2d}-p+1)(p^{n}-1)\big{/}\left((p^d+1)(p^{2d}-1)\right)$
\\[2mm]
\hline
 $-\zeta_p^j{p}^{m+2d}$&$
p^{m-2d-1}(p^{m-d}-1)(p^{m-2d}+1)(p^{n}-1)\big{/}\left((p^d+1)(p^{2d}-1)\right)$
\\[2mm]

\hline $0$&$
\begin{array}{ll}
&(p^{n}-1)(p^{3m-d}-p^{3m-2d}+p^{3m-3d}-p^{3m-4d}+p^{3m-5d}\\[1mm]
&\qquad+p^{n-d}-2p^{n-2d}+p^{n-3d}-p^{n-4d}+1)
\end{array}
$
\\[2mm]
\hline
 $p^n$&$1$
\\[2mm]
\hline
\end{tabular}
\end{center}
where $1\leq j\leq p-1$.

\end{theo}
\begin{proof}
see \textbf{Appendix C.}
\end{proof}

\begin{remark}
Case (i) is exactly Proposition 6 in \cite{Zen Li}.
\end{remark}

In order to give the correlation distribution among the sequences in
$\calF$, we need the following lemma (see \cite{Zen Li}, Lemma 5).

\begin{lemma}\label{q-2}
For any given $\ga\in \bF_q^*$, when $(\al,\be)$ runs through
$\bF_{p^m}\times \bF_q$, the distribution of $S(\al,\be,\ga)$ is the
same as $S(\al,\be,1)$.
\end{lemma}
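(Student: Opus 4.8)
The plan is to exhibit, for each fixed $\ga \in \bF_q^*$, an explicit bijection on the index set $\bF_{p^m} \times \bF_q$ that carries the value $S(\al,\be,\ga)$ to the value $S(\al',\be',1)$ for the corresponding image pair $(\al',\be')$. The natural candidate comes from rescaling the variable of summation: write $x = cy$ in the defining sum \eqref{def S}, where $c \in \bF_q^*$ is chosen so that the linear term $\Tra_1^n(\ga x) = \Tra_1^n(\ga c \cdot y)$ becomes $\Tra_1^n(y)$. Since $\Tra_1^n$ is nondegenerate and $\ga \neq 0$, such a $c$ exists (indeed $c$ is determined up to the additive structure, but a single valid choice suffices). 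With this substitution one computes
\[
S(\al,\be,\ga) = \sum_{y\in\bF_q}\zeta_p^{\Tra_1^m(\al c^{p^m+1} y^{p^m+1}) + \Tra_1^n(\be c^{p^k+1} y^{p^k+1} + y)} = S(\al c^{p^m+1},\, \be c^{p^k+1},\, 1),
\]
using $\Tra_1^m(\al (cy)^{p^m+1}) = \Tra_1^m(\al c^{p^m+1} y^{p^m+1})$ and noting that $\al c^{p^m+1} \in \bF_{p^m}$ because $c^{p^m+1} = c^{p^m}\cdot c = N_{\bF_q/\bF_{p^m}}(c) \in \bF_{p^m}^*$ and $\al \in \bF_{p^m}$.

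Next I would check that the assignment $(\al,\be) \mapsto (\al c^{p^m+1}, \be c^{p^k+1})$ is a bijection of $\bF_{p^m} \times \bF_q$ onto itself. The second coordinate map $\be \mapsto \be c^{p^k+1}$ is multiplication by the nonzero element $c^{p^k+1} \in \bF_q^*$, hence a bijection of $\bF_q$. The first coordinate map $\al \mapsto \al c^{p^m+1}$ is multiplication by $c^{p^m+1} \in \bF_{p^m}^*$ on $\bF_{p^m}$, again a bijection. The product map is therefore a bijection, and it visibly depends only on $c$, i.e. only on $\ga$ — but once $\ga$ is fixed, it is a fixed permutation of the index set. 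Consequently the multiset $\{S(\al,\be,\ga) : (\al,\be) \in \bF_{p^m}\times\bF_q\}$ equals the multiset $\{S(\al',\be',1) : (\al',\be') \in \bF_{p^m}\times\bF_q\}$, which is precisely the claimed equality of distributions.

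The only genuinely delicate point is making sure the substitution $x = cy$ is carried out with the correct value of $c$: we need $\Tra_1^n(\ga c \cdot y) = \Tra_1^n(y)$ for all $y$, i.e. $\Tra_1^n((\ga c - 1)y) = 0$ for all $y$, which forces $\ga c = 1$, so in fact $c = \ga^{-1}$ — a clean closed form, and the argument above goes through verbatim with this choice. I expect no real obstacle here; the main thing to verify carefully is the closure property $\al c^{p^m+1} \in \bF_{p^m}$ (handled by the norm observation) so that the image pair is again a legitimate index in $\bF_{p^m}\times\bF_q$, and the fact that the exponent $p^m+1$ is fixed under the Frobenius manipulations used. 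I would also remark that this is exactly the reduction used in \cite{Zen Li}, Lemma 5, so the proof can be kept brief.
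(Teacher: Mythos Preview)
Your argument is correct: the substitution $x=\ga^{-1}y$ turns $S(\al,\be,\ga)$ into $S(\al\ga^{-(p^m+1)},\be\ga^{-(p^k+1)},1)$, and since $\ga^{-(p^m+1)}\in\bF_{p^m}^*$ (being the norm from $\bF_q$ to $\bF_{p^m}$) the map $(\al,\be)\mapsto(\al\ga^{-(p^m+1)},\be\ga^{-(p^k+1)})$ is a bijection of $\bF_{p^m}\times\bF_q$. The paper does not give its own proof of this lemma but simply refers to \cite{Zen Li}, Lemma~5, so your write-up is essentially a fleshed-out version of that citation and is exactly what is needed.
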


As a consequence of Theorem \ref{value dis T}, Theorem \ref{value
dis S} and Lemma \ref{q-2}, we could give the correlation
distribution amidst the sequences in $\calF$.
\begin{theo}\label{cor dis}
The collection $\calF$ defined in (\ref{def F}) is a family of
$p^{3m}$ $p$-ary sequences with period $q-1$. Its correlation
distribution is given as follows.

(i). If $d'=d$ is odd, then
\begin{center}
\begin{tabular}{|c|c|}
\hline
values & multiplicity \\[2mm]
\hline $
p^{m}-1$&$p^{3m+d}(p^m+1)\left(p^{m-1}(p^{m}+p-1)(p^n-2)+1\right)\big{/}\left(2(p^d+1)\right)$
\\[2mm]
\hline $
-p^{m}-1$&$p^{3m+d}\left(p^{m-1}(p^{m}-p+1)(p^n-2)+1\right)(p^n-2p^{n-d}+1)\big{/}\left(2(p^d-1)(p^m+1)\right)$
\\[2mm]
\hline $ \zeta_p^j
p^{m}-1$&$p^{2n+d-1}(p^n-2)(p^n-1)\big{/}\left(2(p^d+1)\right)$
\\[2mm]
\hline $ -\zeta_p^j
p^{m}-1$&$p^{2n+d-1}(p^n-2)(p^n-2p^{n-d}+1)\big{/}\left(2(p^d-1)\right)$
\\[2mm]

\hline
 $\veps\sqrt{{p}^*}{p}^{m+\frac{d-1}{2}}-1$&$\frac{1}{2}p^{2n-d}\left(p^{n-d-1}(p^{n}-2)+1\right)$
\\[2mm]
\hline
 $\veps\zeta_p^j\sqrt{{p}^*}{p}^{m+\frac{d-1}{2}}-1$&$\frac{1}{2}p^{5m-\frac{3d+1}{2}}\left(p^{m-\frac{d+1}{2}}+\veps
\left(\frac{-j}{p}\right)\right)(p^{n}-2)$
\\[2mm]
\hline
$-p^{\frac{m}{2}+d}-1$&$p^{3m}(p^{m-d}-1)\left(p^{m-d-1}(p^{n}-2)(p^{m-d}-p+1)+1\right)\big{/}\left(p^{2d}-1\right)$\\[2mm]
\hline $- \zeta_p^j p^{m+d}-1$&$p^{2n-d-1}(p^{m-d}-1)(p^{n}-2)(p^{m-d}+1)\big{/}\left(p^{2d}-1\right)$\\[2mm]
\hline
$-1$&$p^{3m}(p^{n}-2)(p^{3m-d}-p^{3m-2d}+p^{3m-3d}-p^{n-2d}+1)$
\\[2mm]
\hline $p^n-1$&$p^{3m}$
\\[2mm]
\hline
\end{tabular}
\end{center}
where $1\leq j\leq p-1$ and $\veps=\pm 1$.

 (ii). If $d'=d$ is even, then

\begin{center}
\begin{tabular}{|c|c|}
\hline
values & multiplicity \\[2mm]
\hline $
p^{m}-1$&$p^{3m+d}(p^m+1)\left(p^{m-1}(p^{m}+p-1)(p^n-1)+1\right)\big{/}\left(2(p^d+1)\right)$
\\[2mm]
\hline $
-p^{m}-1$&$p^{3m+d}\left(p^{m-1}(p^{m}-p+1)(p^n-2)+1\right)(p^n-2p^{n-d}+1)\big{/}\left(2(p^d-1)(p^m+1)\right)$
\\[2mm]
\hline $ \zeta_p^j
p^{m}-1$&$p^{2n+d-1}(p^n-2)(p^n-1)\big{/}\left(2(p^d+1)\right)$
\\[2mm]
\hline $ -\zeta_p^j
p^{m}-1$&$p^{2n+d-1}(p^n-2)(p^n-2p^{n-d}+1)\big{/}\left(2(p^d-1)\right)$
\\[2mm]

\hline
 $\veps{p}^{m+\frac{d}{2}}$&$\frac{1}{2}p^{2n-d}\left(p^{m-\frac{d}{2}-1}(p^{m-\frac{d}{2}}+\veps(p-1))(p^{n}-2)+1\right)$
\\[2mm]
\hline
 $\veps\zeta_p^j{p}^{m+\frac{d}{2}}$&$\frac{1}{2}p^{5m-\frac{3d}{2}-1}(p^{m-\frac{d}{2}}-\veps)(p^{n}-2)$
\\[2mm]
\hline
$-p^{\frac{m}{2}+d}-1$&$p^{3m}(p^{m-d}-1)\left(p^{m-d-1}(p^{n}-2)(p^{m-d}-p+1)+1\right)\big{/}\left(p^{2d}-1\right)$\\[2mm]
\hline $- \zeta_p^j p^{m+d}-1$&$p^{2n-d-1}(p^{n-2d}-1)(p^{n}-2)\big{/}\left(p^{2d}-1\right)$\\[2mm]
\hline
$-1$&$p^{3m}(p^{n}-2)(p^{3m-d}-p^{3m-2d}+p^{3m-3d}-p^{n-2d}+1)$
\\[2mm]
\hline $p^n-1$&$p^{3m}$
\\[2mm]
\hline
\end{tabular}
\end{center}
where $1\leq j\leq p-1$ and $\veps=\pm 1$.

 (iii). If $d'=2d$, then
\begin{center}
\begin{tabular}{|c|c|}
\hline
values & multiplicity \\[2mm]
\hline
$-p^m-1$&$
\begin{array}{ll}
&p^{3m+3d}(p^{m-1}(p^m-p+1)(p^n-2)+1)(p^{n}-p^{n-2d}-p^{n-3d}+\\[1mm]
&\qquad p^m-p^{m-d}+1)\big{/}(p^d+1)(p^{2d}-1)(p^m+1)
\end{array}
$
\\[2mm]
\hline $-\zeta_p^j
p^{m}-1$&$p^{2n+3d-1}(p^{n}-2)(p^{n}-p^{n-2d}-p^{n-3d}+p^m-p^{m-d}+1)\big{/}(p^d+1)(p^{2d}-1)$
\\[2mm]
\hline
${p}^{m+d}-1$&$p^{3m+d}\left(p^{m-d-1}(p^{m-d}+p-1)(p^n-2)+1\right)(p^m+p^{m-d}+p^{m-2d}+1)\big{/}(p^d+1)^2$
\\[2mm]
\hline
$\zeta_p^j{p}^{m+d}-1$&$p^{2n-1}(p^n-2)(p^{m-d}-1)(p^m+p^{m-d}+p^{m-2d}+1)\big{/}(p^d+1)^2$
\\[2mm]
\hline
 $-{p}^{m+2d}-1$&$p^{3m}
(p^{m-d}-1)\left(p^{m-2d-1}(p^{m-2d}-p+1)(p^{n}-2)+1\right)\big{/}\left((p^d+1)(p^{2d}-1)\right)$
\\[2mm]
\hline
 $-\zeta_p^j{p}^{m+2d}-1$&$
p^{2n-2d-1}(p^{m-d}-1)(p^{m-2d}+1)(p^{n}-2)\big{/}\left((p^d+1)(p^{2d}-1)\right)$
\\[2mm]

\hline $-1$&$
\begin{array}{ll}
&p^{3m}(p^{n}-2)(p^{3m-d}-p^{3m-2d}+p^{3m-3d}-p^{3m-4d}+p^{3m-5d}\\[1mm]
&\qquad+p^{n-d}-2p^{n-2d}+p^{n-3d}-p^{n-4d}+1)
\end{array}
$
\\[2mm]
\hline
 $p^n-1$&$p^{3m}$
\\[2mm]
\hline
\end{tabular}
\end{center}
where $1\leq j\leq p-1$.
\end{theo}
\begin{proof}
see \textbf{Appendix C.}
\end{proof}
\begin{remark}
The case (i) has been shown in \cite{Zen Li}, Prop. 6.
\end{remark}

Recall that $\cC_2$ is the cyclic code over $\bF_{p^t}$ with
parity-check polynomial $h_1(x)h_2(x)h_3(x)$ where $h_1(x)$,
$h_2(x)$ and $h_3(x)$ are the minimal polynomials of $\pi^{-1}$,
$\pi^{-(p^k+1)}$ and $\pi^{-(p^m+1)}$ respectively. Here we are
ready to determine the weight distribution of $\cC_2$.

\begin{theo}\label{wei dis C2}
The weight distribution $\{A_0,A_1,\cdots,A_{q-1}\}$ of the cyclic
code $\cC_2$ over $\bF_{p^t}$ ($p\geq 3$) with length $q-1$ and
$\mathrm{dim}_{\bF_{p^t}}\cC_1=\frac{5}{2}n_0$ is shown as
following.

(i). If $d'=d$ and $d/t$ is odd, then
\begin{center}
\begin{tabular}{|c|c|}
\hline
$i$ & $A_i$ \\[2mm]
\hline
$(p^t-1)(p^{n-t}-p^{m-t})$&$p^{m+d-t}(p^{m}+p^t-1)(p^m-1)(p^m+1)^2\big{/}\left(2(p^d+1)\right)$
\\[2mm]
\hline
$(p^t-1)(p^{n-t}+p^{m-t})$&$p^{m+d-t}(p^{m}-p^t+1)(p^m-1)(p^{n}-2p^{n-d}+1)\big{/}\left({2(p^d-1)}\right)$
\\[2mm]
\hline

$(p^t-1)p^{n-t}+p^{m-t}$&$p^{m+d-t}(p^t-1)(p^n-1)^2\big{/}\left(2(p^d+1)\right)$
\\[2mm]
\hline

$(p^t-1)p^{n-t}-p^{m-t}$&$p^{m+d-t}(p^t-1)(p^n-1)(p^{n}-2p^{n-d}+1)\big{/}\left({2(p^d-1)}\right)$
\\[2mm]
\hline

 $(p^t-1)p^{n-t}-p^{m+\frac{d-t}{2}}$&$\frac{1}{2}p^{m-d}(p^t-1)(p^{n-d-t}+p^{\frac{n-d-t}{2}})(p^{n}-1)$
\\[2mm]
\hline

 $(p^t-1)p^{n-t}+p^{m+\frac{d-t}{2}}$&$\frac{1}{2}p^{m-d}(p^t-1)(p^{n-d-t}-p^{\frac{n-d-t}{2}})(p^{n}-1)$
\\[2mm]
\hline

$(p^t-1)(p^{n-t}+p^{m+d-t})$&$p^{m-d-t}(p^{m-d}-1)(p^{m-d}-p^t+1)(p^{n}-1)\big{/}\left(p^{2d}-1\right)$\\[2mm]
\hline

$(p^t-1)p^{n-t}-p^{m+d-t}$&$p^{m-d-t}(p^t-1)(p^{n-2d}-1)(p^{n}-1)\big{/}\left(p^{2d}-1\right)$\\[2mm]
\hline
$(p^t-1)p^{n-t}$&$(p^{n}-1)(p^{3m-d}-p^{3m-2d}+p^{3m-3d}+p^{3m-2d-t}-p^{n-2d}+1)$
\\[2mm]
\hline

$0$&$1$
\\[2mm]
\hline
\end{tabular}
\end{center}

(ii). If $d'=d$ and $d/t$ is even, then
\begin{center}
\begin{tabular}{|c|c|}
\hline
$i$ & $A_i$ \\[2mm]
\hline
$(p^t-1)(p^{n-t}-p^{m-t})$&$p^{m+d-t}(p^{m}+p^t-1)(p^m-1)(p^m+1)^2\big{/}\left(2(p^d+1)\right)$
\\[2mm]
\hline
$(p^t-1)(p^{n-t}+p^{m-t})$&$p^{m+d-t}(p^{m}-p^t+1)(p^m-1)(p^{n}-2p^{n-d}+1)\big{/}\left({2(p^d-1)}\right)$
\\[2mm]
\hline

$(p^t-1)p^{n-t}+p^{m-t}$&$p^{m+d-t}(p^t-1)(p^n-1)^2\big{/}\left(2(p^d+1)\right)$
\\[2mm]
\hline

$(p^t-1)p^{n-t}-p^{m-t}$&$p^{m+d-t}(p^t-1)(p^n-1)(p^{n}-2p^{n-d}+1)\big{/}\left({2(p^d-1)}\right)$
\\[2mm]
\hline

 $(p^t-1)(p^{n-t}-p^{m+\frac{d}{2}-t})$&$\frac{1}{2}p^{m-d}(p^{n-d-t}+(p^t-1)p^{m-t-\frac{d}{2}})(p^{n}-1)$
\\[2mm]
\hline

$(p^t-1)(p^{n-t}+p^{m+\frac{d}{2}-t})$&$\frac{1}{2}p^{m-d}(p^{n-d-t}-(p^t-1)p^{m-t-\frac{d}{2}})(p^{n}-1)$
\\[2mm]
\hline

$(p^t-1)p^{n-t}-p^{m+\frac{d}{2}-t}$&$\frac{1}{2}p^{m-d}(p^t-1)(p^{n-d-t}+p^{m-t-\frac{d}{2}})(p^{n}-1)$
\\[2mm]
\hline

$(p^t-1)p^{n-t}+p^{m+\frac{d}{2}-t}$&$\frac{1}{2}p^{m-d}(p^t-1)(p^{n-d-t}-p^{m-t-\frac{d}{2}})(p^{n}-1)$
\\[2mm]
\hline

$(p^t-1)(p^{n-t}+p^{m+d-t})$&$p^{m-d-t}(p^{m-d}-1)(p^{m-d}-p^t+1)(p^{n}-1)\big{/}\left(p^{2d}-1\right)$\\[2mm]
\hline

$(p^t-1)p^{n-t}-p^{m+d-t}$&$p^{m-d-t}(p^t-1)(p^{n-2d}-1)(p^{n}-1)\big{/}\left(p^{2d}-1\right)$\\[2mm]
\hline
$(p^t-1)p^{n-t}$&$(p^{n}-1)(p^{3m-d}-p^{3m-2d}+p^{3m-3d}-p^{n-2d}+1)$
\\[2mm]
\hline

$0$&$1$
\\[2mm]
\hline
\end{tabular}
\end{center}

(iii). If $d'=2d$, then
\begin{center}
\begin{tabular}{|c|c|}
\hline
values & multiplicity \\[2mm]
\hline $(p^t-1)(p^{n-t}+p^{m-t})$&$
\begin{array}{ll}
&p^{m+3d-t}(p^m-p^t+1)(p^m-1)(p^{n}-p^{n-2d}-p^{n-3d}\\[1mm]
&\qquad+p^m-p^{m-d}+1)\big{/}\left((p^d+1)(p^{2d}-1)\right)
\end{array}
$
\\[2mm]
\hline $(p^t-1)p^{n-t}-p^{m-t}$&$
\begin{array}{ll}
&p^{m+3d-t}(p^t-1)(p^n-1)(p^{n}-p^{n-2d}-p^{n-3d}\\[1mm]
&\qquad+p^m-p^{m-d}+1)\big{/}\left((p^d+1)(p^{2d}-1)\right)
\end{array}
$
\\[2mm]

\hline
$(p^t-1)(p^{n-t}-p^{m+d-t})$&$p^{m-t}(p^{m-d}+p^t-1)(p^n-1)(p^m+p^{m-d}+p^{m-2d}+1)\big{/}(p^d+1)^2$
\\[2mm]
\hline
$(p^t-1)p^{n-t}+p^{m+d-t}$&$p^{m-t}(p^t-1)(p^{m-d}-1)(p^n-1)(p^m+p^{m-d}+p^{m-2d}+1)\big{/}(p^d+1)^2$
\\[2mm]
\hline

 $(p^t-1)(p^{n-t}+p^{m+2d-t})$&$p^{m-2d-t}(p^{m-2d}-p^t+1)
(p^{m-d}-1)(p^{n}-1)\big{/}\left((p^d+1)(p^{2d}-1)\right)$
\\[2mm]
\hline
 $(p^t-1)p^{n-t}-p^{m+2d-t}$&$p^{m-2d-t}(p^t-1)(p^{m-2d}-1)
(p^{m-d}-1)(p^{n}-1)\big{/}\left((p^d+1)(p^{2d}-1)\right)$
\\[2mm]

\hline
 $(p^t-1)p^{n-t}$&$
\begin{array}{ll}
 &(p^{n}-1)(p^{3m-d}-p^{3m-2d}+p^{3m-3d}-p^{3m-4d}+p^{3m-5d}\\[1mm]
&\qquad+p^{n-d}-2p^{n-2d}+p^{n-3d}-p^{n-4d}+1)
\end{array}
$
\\[2mm]
\hline $0$&$1$
\\[2mm]
\hline
\end{tabular}
\end{center}

\end{theo}
\begin{proof}
see \textbf{Appendix C.}
\end{proof}
\begin{remark}
The case (i) with $t=1$ has been shown in \cite{Zen Li}, Theorem 1.
\end{remark}

\section{Appendix A}

We need to introduce some results to prove Lemma \ref{rank}.

\begin{lemma}\label{num solution}(see Bluher \cite{Bluh}, Theorem 5.4 and 5.6)
Let $g(z)=z^{p^h+1}-bz+b$ with $b\in \bF_{p^l}^*$. Then the number
of the solutions to $g(z)=0$ in $\bF_{p^l}$ is $0$, $1$, $2$ or
$p^{\gcd(h,l)}+1$. Moreover, the number of $b\in \bF_{p^l}^*$ such
that $g(z)=0$ has unique solution in $\bF_{p^l}$ is
$p^{l-\gcd(h,l)}$ and if $z_0$ is the unique solution, then
$(z_0-1)^{\frac{p^l-1}{p^{\gcd(h,l)}-1}}=1.$
\end{lemma}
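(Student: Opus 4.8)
The plan is to linearize $g$ near one of its roots and to read both the list of possible root‑numbers and the finer count off the behaviour of a single $\bF_p$-linear map. Throughout set $e=\gcd(h,l)$, and recall $\gcd(p^h-1,p^l-1)=p^e-1$, so that the image of $w\mapsto w^{p^h-1}$ on $\bF_{p^l}^*$ is exactly the group of $(p^e-1)$-th powers. For $b\ne0$, neither $z=0$ (since $g(0)=b$) nor $z=1$ (since $g(1)=1$) is a root. If $z_0$ is a root, expanding $(z_0+y)^{p^h+1}=(z_0^{p^h}+y^{p^h})(z_0+y)$ and using $g(z_0)=0$ gives
\[
g(z_0+y)=y\bigl(y^{p^h}+z_0\,y^{p^h-1}+(z_0^{p^h}-b)\bigr),
\]
with $z_0^{p^h}-b\ne0$ (otherwise $g(z_0)=b\ne0$). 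Dividing the bracket by $y^{p^h}$ and writing $w=1/y$ puts the roots $z\ne z_0$ in bijection, via $w=1/(z-z_0)$, with the solutions of the affine equation $L(w)=-1$, where $L(w)=(z_0^{p^h}-b)w^{p^h}+z_0w$ is $\bF_p$-linear. Its solution set is empty or a coset of $K:=\ker L$, and $K\setminus\{0\}$ is a fibre of $w\mapsto w^{p^h-1}$, so $|K|\in\{1,p^e\}$. Hence $g$ has $0$, $1$, $2$, or $p^e+1=p^{\gcd(h,l)}+1$ roots in $\bF_{p^l}$: two when $|K|=1$ ($L$ is then bijective and supplies one further root), and $p^e+1$ or $1$ when $|K|=p^e$, according as $-1\in\Im L$ or not.

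For the last assertion, suppose $z_0$ is the unique root; then $|K|=p^e$ necessarily, because $|K|=1$ would produce a second root. Now when $z_0$ is a root one has $b=z_0^{p^h+1}/(z_0-1)$, hence $z_0^{p^h}-b=-z_0^{p^h}/(z_0-1)$, so the equation cutting out $K\setminus\{0\}$ becomes $w^{p^h-1}=(z_0-1)z_0^{\,1-p^h}$; since $z_0^{\,1-p^h}$ is a $(p^h-1)$-th power, $|K|=p^e$ if and only if $z_0-1$ is a $(p^e-1)$-th power, i.e. $(z_0-1)^{(p^l-1)/(p^e-1)}=1$, which is the claimed identity.

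To count the $b$'s with a unique root, I would introduce $\psi(z)=z^{p^h+1}/(z-1)$ on $\bF_{p^l}\setminus\{0,1\}$, taking values in $\bF_{p^l}^*$, so that $z$ is a root of $g$ for parameter $b$ exactly when $\psi(z)=b$; let $A_i$ be the number of $b\in\bF_{p^l}^*$ for which $g$ has $i$ roots. Three counting relations then suffice. Partitioning $\bF_{p^l}\setminus\{0,1\}$ by fibres of $\psi$ gives $A_1+2A_2+(p^e+1)A_{p^e+1}=p^l-2$. Counting the $z$ for which $z-1$ is a $(p^e-1)$-th power — by the previous paragraph exactly the roots attached to those $b$ with $p^e+1$ or with $1$ root — gives $(p^e+1)A_{p^e+1}+A_1=\#\{c\in\bF_{p^l}^*:c\ne-1,\ c^{(p^l-1)/(p^e-1)}=1\}$. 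And double-counting the ordered pairs $z\ne z'$ with $\psi(z)=\psi(z')$ gives $2A_2+p^e(p^e+1)A_{p^e+1}=(p^l-2)-\delta$, where after the substitutions $t=z/z'$ and then $v=(t-1)^{-1}$ the number $\delta$ becomes the number of roots in $\bF_{p^l}$ of $v^{p^h}+v+1$; the same kernel analysis as above, together with a short trace computation showing that $-1$ lies in the image of $v\mapsto v^{p^h}+v$ whenever that image is proper, gives $\delta\in\{1,p^e\}$ explicitly. Solving the three relations for $A_1$ yields $A_1=p^{l-\gcd(h,l)}$.

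The first two steps are routine linear algebra over $\bF_{p^l}$. I expect the real difficulty to be the final count: it needs a relation independent of the first two, and that is exactly what forces the auxiliary polynomial $v^{p^h}+v+1$ into the picture and the slightly delicate evaluation of $\delta$ (for which one uses that $v\mapsto v^{p^h}+v$ is $\bF_{p^e}$-linear with kernel of size $1$ or $p^e$).
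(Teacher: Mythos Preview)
The paper does not prove this lemma at all: it is stated with the attribution ``see Bluher~\cite{Bluh}, Theorem 5.4 and 5.6'' and then used as a black box. So there is no proof in the paper to compare against; you are supplying an independent argument for a result the authors simply quote.

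Your argument is correct. The linearisation $z\mapsto z_0+1/w$ turning the residual roots into the affine equation $L(w)=-1$ with $L(w)=(z_0^{p^h}-b)w^{p^h}+z_0w$ is clean, and the dichotomy $|K|\in\{1,p^e\}$ immediately gives the list $\{0,1,2,p^e+1\}$ of possible root numbers. Your identification of $|K|=p^e$ with $(z_0-1)^{(p^l-1)/(p^e-1)}=1$ via $z_0^{p^h}-b=-z_0^{p^h}/(z_0-1)$ is exactly right and also feeds into the second counting relation.

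For the count $A_1=p^{l-e}$ your three relations are all valid. The third one can be checked by the substitution $u=z'w=(t-1)^{-1}$: the pair equation becomes $u^{p^h}=(z'-1)(u+1)$, so for each $u\notin\{0,-1\}$ with $u^{p^h}+u+1\ne0$ there is a unique admissible $z'$, and neither $0$ nor $-1$ is a root of $u^{p^h}+u+1$; hence the number of ordered pairs is $(p^l-2)-\delta$ with $\delta=\#\{u:u^{p^h}+u+1=0\}$, as you assert. Solving the system gives
\[
A_1=\frac{N(p^e-1)+\delta}{p^e},\qquad N=\frac{p^l-1}{p^e-1}-\big[\,(-1)\ \text{is a $(p^e-1)$-th power}\,\big],
\]
and in both parity cases this collapses to $p^{l-e}$.

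One simplification: you do not need a trace computation to see that $-1\in\Im(v\mapsto v^{p^h}+v)$. Since $p$ is odd in this paper, $v=-\tfrac12$ satisfies $v^{p^h}+v=-1$ outright, so $\delta\ge1$ always and $\delta\in\{1,p^e\}$ is determined purely by $|\ker|$, i.e.\ by the parity of $l/e$. This removes the only step you flagged as delicate.
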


The following lemma has been proven in \cite{Bluh} and \cite{Zen
Li}. We will give some of the details for self-containing.
\begin{lemma}\label{psi}
Let $\psi_{\al,\be}(z)=\be^{p^{n-k}} z^{p^{m-k}+1}+\al z+\be$ with
$\al\in \bF_{p^m}^*, \be\in \bF_q^*$. Then
\begin{itemize}
  \item[(i).] $\psi_{\al,\be}(z)=0$ has either $0,1,2$ or $p^{d'}+1$
  solutions in $\bF_q$.
  \item[(ii).]  If $z_1, z_2$ are two solutions of
  $\psi_{\al,\be}(z)=0$ in $\bF_q$, then $z_1z_2$ is $(p^d-1)$-th power in $\bF_q$.
  \item[(iii).] If $\psi_{\al,\be}(z)=0$ has $p^{d'}+1$
  solutions in $\bF_q$, then for any two solutions $z_1$ and $z_2$, we
  have $z_1/z_2$ is a $(p^{d'}-1)$-th power in $\bF_q$.
  \item[(iv).] If $\psi_{\al,\be}(z)=0$ has exactly one
  solution in $\bF_q$, then it is a $(p^{d}-1)$-th power in $\bF_q$.
\end{itemize}
\end{lemma}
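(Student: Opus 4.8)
The plan is to normalise $\psi_{\al,\be}(z)=0$ into a Bluher polynomial, after which (i) and (iv) follow from Lemma~\ref{num solution} while (ii) and (iii) come from a direct manipulation of the defining equation. Throughout, exponents of $p$ are read modulo $n$; let $h$ denote the residue of $m-k$ in $\{1,\dots,n-1\}$, so that $z^{p^{m-k}}=z^{p^{h}}$ on $\bF_q$ and, by (\ref{rel d d'}), $\gcd(h,n)=\gcd(m-k,2m)=\gcd(m+k,2k)=d'$. Since $\al\in\bF_{p^m}^{*}$, the change of variable $z=-\tfrac{\be}{\al}z'$ is a bijection of $\bF_q$, and after dividing by the leading coefficient the equation $\psi_{\al,\be}(z)=0$ becomes
\[
g(z')=z'^{\,p^{h}+1}-bz'+b=0,\qquad b=\al^{p^{h}+1}\be^{-p^{n-k}-p^{h}}=\al^{p^{h}+1}N_{\bF_q/\bF_{p^m}}(\be)^{-p^{h}}\in\bF_{p^m}^{*},
\]
using $p^{n-k}+p^{h}=p^{h}(p^{m}+1)$. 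Part (i) is then immediate from Lemma~\ref{num solution} with $l=n$ and $\gcd(h,l)=d'$: the number of roots in $\bF_q$ of $g$, hence of $\psi_{\al,\be}$, is $0$, $1$, $2$ or $p^{d'}+1$.

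For (ii), let $z_1\neq z_2$ be solutions. Forming $z_2\psi_{\al,\be}(z_1)-z_1\psi_{\al,\be}(z_2)=0$, the $\al z_1z_2$ terms cancel and $z_1^{p^{h}+1}z_2-z_2^{p^{h}+1}z_1=z_1z_2(z_1-z_2)^{p^{h}}$, so dividing by $z_1-z_2$ gives
\[
\be^{p^{n-k}}z_1z_2(z_1-z_2)^{p^{h}-1}=\be,\qquad\text{i.e.}\qquad z_1z_2=\be^{\,1-p^{n-k}}(z_1-z_2)^{\,1-p^{h}}.
\]
Here $(z_1-z_2)^{1-p^{h}}$ lies in the group of $\bigl(p^{\gcd(h,n)}-1\bigr)=(p^{d'}-1)$-th powers of $\bF_q^{*}$ and $\be^{1-p^{n-k}}$ in that of $\bigl(p^{\gcd(n-k,n)}-1\bigr)=\bigl(p^{\gcd(k,n)}-1\bigr)$-th powers; since $d\mid d'$ and $d\mid\gcd(k,n)$, both factors — and hence $z_1z_2$ — are $(p^{d}-1)$-th powers. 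For (iii), when $\psi_{\al,\be}$ has $p^{d'}+1\geq3$ roots, pick three distinct ones $z_1,z_2,z_3$; the identity above gives $z_1z_2(z_1-z_2)^{p^{h}-1}=\be^{1-p^{n-k}}=z_1z_3(z_1-z_3)^{p^{h}-1}$, and cancelling $z_1$ yields $z_2/z_3=\bigl((z_1-z_3)/(z_1-z_2)\bigr)^{p^{h}-1}$, a $(p^{d'}-1)$-th power. Any pair of distinct solutions can be put in the roles of $z_2,z_3$ (with a third distinct solution as $z_1$), so every such ratio is a $(p^{d'}-1)$-th power, proving (iii).

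For (iv), suppose $\psi_{\al,\be}$ has a unique root $z_0$; then $g$ has the unique root $z_0'=-\tfrac{\al}{\be}z_0$ in $\bF_q=\bF_{p^{2m}}$, and since $b\in\bF_{p^m}^{*}$ the root set of $g$ is $\Gal(\bF_{p^{2m}}/\bF_{p^m})$-stable, whence $z_0'\in\bF_{p^m}^{*}$. By the last assertion of Lemma~\ref{num solution} — over $\bF_{p^n}$, and also over $\bF_{p^m}$ after rewriting $\psi_{\al,\be}(z)=0$ (with $z=-\be\delta$, $\delta\in\bF_{p^m}$) as a Bluher polynomial with coefficients in $\bF_{p^m}$ — one gets that $z_0'-1$ is a $(p^{d}-1)$-th power. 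Combining this with the identities $z_0^{p^{h}+1}=\be^{1-p^{n-k}}(z_0'-1)$ (from $\psi_{\al,\be}(z_0)=0$) and $z_0^{p^{h}-1}=\bigl(-\tfrac{\be}{\al}\bigr)^{p^{h}-1}z_0'^{\,p^{h}-1}$, and the fact that $\be^{1-p^{n-k}}$, $z_0'-1$, $z_0'^{\,p^{h}-1}$ (since $\gcd(p^{h}-1,p^m-1)=p^{d}-1$) and $\bigl(-\tfrac{\be}{\al}\bigr)^{p^{h}-1}$ (since $\gcd(p^{h}-1,p^n-1)=p^{d'}-1$) are all $(p^{d}-1)$-th powers, one deduces that $z_0^{2}$ is a $(p^{d}-1)$-th power in $\bF_q$. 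The real obstacle is the final step of (iv): upgrading ``$z_0^{2}$ is a $(p^{d}-1)$-th power'' to ``$z_0$ is a $(p^{d}-1)$-th power'' needs the finer description of the $b$'s with a single root in Bluher's theorem (Lemma~\ref{num solution}) and the interplay of the $\bF_{p^m}$- and $\bF_{p^n}$-arithmetic; this is the point at which one follows \cite{Bluh} and \cite{Zen Li}.
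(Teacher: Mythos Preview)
Your treatment of (i) is the same normalisation the paper uses. For (ii) and (iii) you take a genuinely different route: the paper simply cites \cite{Zen Li}, Prop.~1(2) for (ii) and \cite{Bluh}, Theorem~4.6(iv) for (iii), whereas you derive the key identity
\[
z_1z_2(z_1-z_2)^{p^{h}-1}=\be^{\,1-p^{n-k}}
\]
directly from $z_2\psi_{\al,\be}(z_1)-z_1\psi_{\al,\be}(z_2)=0$ and read off both statements from it. This is correct and pleasantly self-contained; the only thing it costs you is that (iii) then needs a third root $z_3$ to compare two instances of the identity, which is available since $p^{d'}+1\ge 4$. The paper's citation-based proof of (iii) works for any pair directly via Bluher's structural description of the root set, but your three-root trick is shorter and avoids importing that machinery.

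For (iv) you correctly observe $z_0'\in\bF_{p^m}^{*}$ (Galois stability of the root set of $g$, since $b\in\bF_{p^m}^{*}$), and your chain of identities legitimately yields that both $z_0^{p^{h}+1}$ and $z_0^{p^{h}-1}$ are $(p^{d}-1)$-th powers, hence so is $z_0^{2}$. But, as you say yourself, this does \emph{not} finish: since $p$ is odd, $p^{d}-1$ is even, and in the cyclic quotient $\bF_q^{*}/(\bF_q^{*})^{p^{d}-1}$ of even order the condition ``$z_0^{2}$ is trivial'' leaves open the possibility that $z_0$ maps to the unique element of order~$2$. Closing this gap really does require the finer input from \cite{Bluh}/\cite{Zen Li} that pins down $z_0'$ (not just $z_0'-1$) as a $(p^{d}-1)$-th power in $\bF_{p^m}$; once that is known, $z_0=-(\be/\al)z_0'$ together with $z_0^{p^{h}+1}=\be^{1-p^{n-k}}(z_0'-1)$ gives $z_0^{p^{h}}$ and hence $z_0$ as a $(p^{d}-1)$-th power. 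The paper does not prove (iv) either---it, like (ii), is deferred to \cite{Zen Li}, Prop.~1(3)---so on this point you are no less complete than the original, but your write-up should make the dependence on that reference explicit rather than leaving it as a trailing remark.
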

\begin{proof}
\begin{itemize}
  \item[(i).] By scaling $y=-\frac{\al}{\be} z$ and
$b=\frac{\al^{p^{m-k}+1}}{\beta^{p^{m-k}(p^m+1)}}$, we can rewrite
the equation $\psi_{\al,\be}(z)=0$ as
\begin{equation}\label{standard equ}
y^{p^{m-k}+1}-by+b=0.
\end{equation}
Since $b\in \bF_q^*$ and $\gcd(m-k, n)=\gcd(m-k,2k)=d'$, then the
result follows from Lemma \ref{num solution}.
  \item[(ii).] See \cite{Zen Li}, Prop.1 (2).
  \item[(iii).] Denote by $y_i=-\frac{\al}{\be} z_i$ for $i=1,2$. Since $\gcd(n,m-k)=d'$, from \cite{Bluh}, Theorem 4.6 (iv) we get $(y_1/y_2)^{\frac{q-1}{p^{d'}-1}}=1$ which is equivalent to
  $(z_1/z_2)^{\frac{q-1}{p^{d'}-1}}=1$.
  \item[(iv).] See \cite{Zen Li}, Prop.1 (3).
\end{itemize}
\end{proof}

{\it \textbf{Proof of Lemma \ref{rank}}}: (i). For
$Y=(y_1,\cdots,y_s)\in \bF_{q_0}^s$, $y=y_1v_1+\cdots+y_sv_s\in
\bF_q$, we know that
\begin{equation}\label{bil form1}
F_{\alpha,\beta}(X+Y)-F_{\alpha,\beta}(X)-F_{\alpha,\beta}(Y)=2XH_{\alpha,\beta}Y^T
\end{equation}
is equal to
\begin{equation}\label{bil form2}
f_{\alpha,\beta}(x+y)-f_{\alpha,\beta}(x)-f_{\alpha,\beta}(y)=\Tra_{d}^n\left(y(\alpha
x^{p^{m}}+\beta x^{p^k}+\be^{p^{n-k}} x^{p^{n-k}})\right)
\end{equation}
since $\Tra_d^m(\al x^{p^m}y+\al x y^{p^m})=\Tra_d^n(\al x^{p^m}y).$

 Let
\begin{equation}\label{def phi}
\phi_{\al,\be}(x)=\alpha x^{p^{m}}+\beta x^{p^k}+\be^{p^{n-k}}
x^{p^{n-k}}.
\end{equation}
 Therefore,
\[{\setlength\arraycolsep{2pt}
\begin{array}{lcl}
r_{\al,\be}=r& \Leftrightarrow&\text{the number of common solutions of}\;XH_{\alpha,\beta}Y^T=0\;\text{for all}\;Y\in \bF_{q_0}^s\;\text{is}\; q_0^{s-r}, \\[2mm]
& \Leftrightarrow&\text{the number of common solutions of}\;\Tra_{d}^n\left(y\cdot\phi_{\al,\be}(x)\right)=0\;\text{for all}\;y\in \bF_q\;\text{is}\; q_0^{s-r}, \\[2mm]
&\Leftrightarrow&\phi_{\al,\be}(x)=0\;\text{has}\; q_0^{s-r}\;
\text{solutions in}\; \bF_q.
\end{array}
}
\]

Since $\phi_{\al,\be}(x)$ is a $p^d$-linearized polynomial,  then
the set of the zeroes to $\phi_{\al,\be}(x)=0$ in $\bF_{p^n}$, say
$V$, forms an $\bF_{p^d}$-vector space.

If $\al=0$ and $\be\neq 0$, $\phi_{\al,\be}(x)=0$ becomes $\be
x^{p^k}+\be^{p^{n-k}}x^{p^{n-k}}=0$ and then $\be
^{p^k}x^{p^{2k}}+\be x=0$. In this case (\ref{def phi}) has $1$ or
$p^{d'}$ solutions according to $-\be^{1-p^k}$ is $(p^{d'}-1)$-th
power in $\bF_q$ or not. Hence $r_{0,\be}=s$ or $s-d'/d$. If
$\al\neq 0$ and $\be=0$, then $\phi_{\al,\be}(x)=0$ has unique
solution $x=0$ and as a consequence $r_{\al,0}=s$.

In the following we assume $\al\be\neq 0$, we need to consider the
nonzero solutions of $\phi_{\al,\be}(x)=0$. By substituting
$z=x^{p^k(p^{m-k}-1)}$ we get
\begin{equation}\label{def psi}
\psi_{\al,\be}(z)=\be^{p^{n-k}}z^{p^{m-k}+1}+\al z+\be=0.
\end{equation}

From Lemma \ref{num solution}, $\psi_{\al,\be}(z)=0$ has either
$0,1,2$ or $p^{d'}+1$ solutions in $\bF_q$.  In the case $d'=d$, by
Lemma \ref{psi}, if $\psi_{\al,\be}(z)=0$ has at least two solutions
in $\bF_q$, then all or none of the solutions  are $(p^d-1)$-th
power. Then $\psi_{\al,\be}(x)=0$ has $0, p^d-1, 2(p^d-1)$ or
$(p^d+1)(p^d-1)$ nonzero solutions. Take the  solution $x=0$ in
consideration, since $2p^d-1$ is not a $p^d$-th power, then it is
impossible and we get the result.

In the case $d'=2d$, the argument is almost the same except
$\psi_{\al,\be}(z)=0$ has two solutions $z_1,z_2$. If none, one or
two of the solutions is $(p^{2d}-1)$-th power, then
$\phi_{\al,\be}(x)=0$ has $1$, $p^{2d}-1$ or $2(p^{2d}-1)$ nonzero
solutions. But $2p^{2d}-1$ is not a $p^d$-th power. Then the result
follows.

In the case $d'=d$, if $\psi_{\al,\be}(z)=0$ has unique solution
$z_0\in \bF_q$, then it is also the unique solution in $\bF_{p^m}$
and the converse is also valid, since $b\in\bF_{p^m}$ and the
solutions of $\psi_{\al,\be}(z)=0$ in
$\bF_q\big{\backslash}\bF_{p^m}$ take on pairs $(z_0,z_0^{p^m})$. By
\cite{Bluh}, Theorem 5.6, the number of $b\in \bF_{p^m}^*$ such that
$\psi_{\al,\be}(z)=0$ has unique solution in $\bF_{p^m}$ is
$p^{m-d}$. For fixed $b$ and $\al\in \bF_{p^m}^*$, the number of
$\be\in \bF_{q}^*$ satisfying
$b=\frac{\al^{p^{m-k}+1}}{\beta^{p^{m-k}(p^m+1)}}$ is $p^m+1$. Hence
$n_1=p^{m-d}(p^m-1)(p^m+1)=p^{m-d}(p^n-1)$. $\square$

 {\it\textbf{Proof of Lemma \ref{moment}}}: (i). We observe that
\[ {\setlength\arraycolsep{2pt}
\begin{array}{ll}
&\sum\limits_{\al\in \bF_{p^m},\be\in
\bF_q}T(\al,\be)=\sum\limits_{\al\in
\bF_{p^m},\be\in\bF_q}\sum\limits_{x\in
\bF_q}\zeta_p^{\Tra_1^m(\al x^{p^m+1})+\Tra_1^n(\be x^{p^k+1})}\\[3mm]
&\quad\quad=\sum\limits_{x\in\bF_q}\sum\limits_{\al\in
\bF_{p^m}}\zeta_p^{\Tra_1^m(\al x^{p^m+1})}\sum\limits_{\be\in
\bF_q}\zeta_p^{\Tra_1^n(\be
x^{p^k+1})}=q\cdot\sum\limits_{\stackrel{\al\in
\bF_{p^m}}{x=0}}\zeta_p^{\Tra_1^m(\al x^{p^m+1})}=p^{3m}.
\end{array}
}
\]
(ii). We can calculate
\[
{ \setlength\arraycolsep{2pt}
\begin{array}{lll}
\sum\limits_{\al\in
\bF_{p^m},\be\in\bF_q}T(\al,\be)^2&=&\sum\limits_{x,y\in
\bF_q}\sum\limits_{\al\in
\bF_{p^m}}\zeta_p^{\Tra_1^m\left(\al\left(x^{p^m+1}+y^{p^m+1}\right)\right)}\sum\limits_{\be\in
\bF_q}\zeta_p^{\Tra_1^n\left(\be\left(x^{p^k+1}+y^{p^k+1}\right)\right)}\\[2mm]
&=&M_2\cdot p^{3m}\end{array}}
\] where
$M_2$ is the number of solutions to the equation
\begin{eqnarray}\label{def 2nd}
\left\{
\begin{array}{ll}
 x^{p^m+1}+y^{p^m+1}=0&\\[2mm]
 x^{p^k+1}+y^{p^k+1}=0&
 \end{array}
 \right.
\end{eqnarray}

If $xy=0$ satisfying (\ref{def 2nd}), then $x=y=0$. Otherwise
$(x/y)^{p^m+1}=(x/y)^{p^k+1}=-1$ which yields that
$(x/y)^{p^{m-k}-1}=1$. Denote by $x=ty$.  Since $\gcd(m-k,n)=d'$,
then $t\in \bF_{p^{d'}}^*$.
\begin{itemize}
  \item If $d'=d$, then $t\in \bF_{p^d}^*$ and (\ref{def 2nd}) is
  equivalent to $x^2+y^2=0$. Hence $t^2=-1$. There are two or none
  of $t\in \bF_{p^d}^*$ satisfying $t^2=-1$ depending on $p^d\equiv
  1\pmod 4$ or $p^d\equiv
  3\pmod 4$. Therefore
  \[
  M_2=\left\{
  \begin{array}{ll}
  1+2(q-1)=2q-1, &\text{if}\; p^d\equiv
  1\pmod 4\\[2mm]
  1, &\text{if}\; p^d\equiv
  3\pmod 4.
  \end{array}
  \right.
  \]
  \item If $d'=2d$, then by (\ref{rel d d'}) we get (\ref{def 2nd}) is equivalent to
  $x^{p^d+1}+y^{p^d+1}=0$. Then we have $t^{p^d+1}=-1$ which has
  $p^d+1$ solutions in $\bF_{p^{d'}}^*$. Therefore
  \[M_2=(p^d+1)(p^n-1)+1=p^{n+d}+p^n-p^d.\]
\end{itemize}

(iii). See \cite{Zen Li}, Prop.4 iii).$\square$

\begin{remark}
For the case $d'=2d$, $\sum\limits_{\al,\be\in \bF_q}T(\al,\be)^3$
can also be determined, but we do not need this result.
\end{remark}

 {\it\textbf{Proof of Lemma \ref{Artin}}}:
We get that
\[
\begin{array}{rcl}
qN&=&\sum\limits_{\om\in \bF_q}\sum\limits_{x,y\in
\bF_q}\zeta_p^{\Tra_1^n\left(\om\left(\frac{1}{2}\al x^{p^m+1}+\be x^{p^k+1}-y^{p^d}+y\right)\right)}\\[2mm]
&=&q^2+\sum\limits_{\om\in \bF_q^*}\sum\limits_{x\in
\bF_q}\zeta_p^{\Tra_1^n\left(\om\left(\frac{1}{2}\al x^{p^m+1}+\be
x^{p^k+1}\right)\right)} \sum\limits_{y\in
\bF_q}\zeta_p^{\Tra_1^n\left(y^{p^d}\left(\om^{p^d}-\om\right)\right)}\\[2mm]
&=&q^2+q\sum\limits_{\om\in \bF_{q_0}^*}\sum\limits_{x\in
\bF_q}\zeta_p^{\Tra_1^n\left(\om\left(\frac{1}{2}\al x^{p^m+1}+\be
x^{p^k+1}\right)\right)}\\[2mm]
&=&q^2+q\sum\limits_{\om\in \bF_{q_0}^*}\sum\limits_{x\in
\bF_q}\zeta_p^{\Tra_1^m\left(\om\al
x^{p^m+1}\right)+\Tra_1^n\left(\om\be x^{p^k+1}\right)}\\[2mm]
&=&q^2+q\sum\limits_{\om\in \bF_{q_0}^*}\sum\limits_{x\in
\bF_q}T(\om\al,\om\be)
\end{array}
\]
where the 3-rd equality follows from that the inner sum is zero
unless $\om^{p^d}-\om=0$, i.e. $\om\in \bF_{q_0}$ and the 4-th
equality follows from $\frac{1}{2}\om \al x^{p^m+1}\in \bF_{p^m}$.

For any $\om\in \bF_{q_0}^*$,  by (\ref{def H_al be}) we have
$F_{\om\al,\om\be}(X)=\om\cdot F_{\al,\be}(X)$,
$H_{\om\al,\om\be}=\om\cdot H_{\al,\be}$ and
$r_{\om\al,\om\be}=r_{\al,\be}$. From Lemma \ref{qua} (i) we know
that
\begin{equation}\label{om rel}
T({\om\al,\om\be})=\sum\limits_{X\in
\bF_{q_0}^s}\zeta_p^{\Tra_1^{d}(XH_{\om\al,\om\be}X^T)}=\eta_0(\om)^{r_{\al,\be}}T(\al,\be).
\end{equation}

In the case $d'=2d$, by Lemma \ref{rank}ii) we get that
$r_{\al,\be}$ is even. Hence $T(\om\al,\om\be)=T(\al,\be)$ for any
$\om\in \bF_{p^t}^*$ and $N=q+(p^d-1)T(\al,\be)$. $\square$

\section{Appendix B}

{\it\textbf{Proof of Theorem \ref{value dis T}}}:

Define
\[N_{i}=\left\{(\al,\be)\in
\bF_{p^m}\times
\bF_q\backslash\{(0,0)\}\left|r_{\al,\be}=s-i\right.\right\}.
\] Then $n_i=\big{|}N_i\big{|}$.

 According to Lemma \ref{qua} (setting
$F(X)=XH_{\al,\be}X^T=\Tra_{d}^m(\al x^{p^m+1})+\Tra_{d}^n(\be
x^{p^k+1})$), we define that for $\varepsilon=\pm 1$ and $0\leq
i\leq s-1$,
\[N_{i,\varepsilon}=\left\{
\begin{array}{ll}
&\left\{(\al,\be)\in \bF_{p^m}\times
\bF_q\backslash\{(0,0)\}\left|T(\al,\be)=\veps
p^{\frac{m+id}{2}} \right.\right\} \qquad\text{if}\; m+id\;\text{is even}\;,\\[3mm]
&\left\{(\al,\be)\in \bF_{p^m}\times
\bF_q\backslash\{(0,0)\}\left|T(\al,\be)=\veps\sqrt{p^*}
p^{\frac{m+id-1}{2}} \right.\right\} \qquad\text{if}\;
m+id\;\text{is odd}
\end{array}
\right.\] where $p^*=(-1)^{\frac{p-1}{2}}p$  and
$n_{i,\veps}=|N_{i,\veps}|$. Then $N_i=N_{i,1}\bigcup N_{i,-1}$ and
$n_i=n_{i,1}+n_{i,-1}$.

 (i). For the case $d'=d$, by
 Lemma \ref{rank} we have
\begin{equation}\label{par val}
n_{1,1}+n_{1,-1}=p^{m-d} (p^{n}-1) .
\end{equation}
Choose an element $\om\in \bF_{q_0}^*$ such that $\eta_0(\om)=-1$.
For any $(\al,\be)\in N_{1,\veps}$, since $s-1$ is odd, by (\ref{om
rel}) we get $T(\om\al,\om\be)=-T(\al,\be)$. Then the map
$(\al,\be)\mapsto (\om \al,\om\be)$ give a 1-to-1 correspondence
from $N_{1,1}$ to $N_{1,-1}$ Combining (\ref{par val}) one has
\begin{equation}\label{val n11 n1-1}
n_{1,1}=n_{1,-1}=\frac{1}{2}p^{m-d}(p^{n}-1).
\end{equation}

 Moreover, from Lemma \ref{moment}  and (\ref{val n11 n1-1}) we have
\begin{equation}\label{par sum1}
 \left(n_{0,1}-n_{0,-1}\right)+p^{d}\left(n_{2,1}-n_{2,-1}\right)=p^m(p^m-1)
 \end{equation}
\begin{equation}\label{par sum2}
 \left(n_{0,1}+n_{0,-1}\right)+p^{2d}\left(n_{2,1}+n_{2,-1}\right)=p^{n}(p^m-1)
 \end{equation}
\begin{equation}\label{par sum3}
 \left(n_{0,1}-n_{0,-1}\right)+p^{3d}\left(n_{2,1}-n_{2,-1}\right)=p^d(p^m-1)(-p^{n-d}+p^{m}+1).
 \end{equation}
In addition, by Lemma \ref{rank} and (\ref{val n11 n1-1}) we have
\begin{equation}\label{par sum0}
 \left(n_{0,1}+n_{0,-1}\right)+\left(n_{2,1}+n_{2,-1}\right)=(p^m-1)(p^{n}-p^{n-d}+p^m-p^{m-d}+1).
 \end{equation}
Combining (\ref{par sum1})--(\ref{par sum0}), together with
(\ref{val n11 n1-1}) we get the result.

 (ii). For the case $d'=2d$, by Lemma \ref{reduce num} we have
 \begin{equation}\label{val all 0}
 n_{0,1}=n_{2,-1}=n_{4,1}=0.
 \end{equation}
Combining Lemma \ref{rank}, Lemma \ref{moment} and (\ref{val all 0})
we have
\begin{equation}\label{par sum02}
 n_{0,-1}+n_{2,1}+n_{4,-1}=p^{3m}-1
 \end{equation}

\begin{equation}\label{par sum12}
 -n_{0,-1}+p^d\cdot n_{2,1}-p^{2d}\cdot n_{4,-1}=p^m(p^{m}-1)
 \end{equation}

 \begin{equation}\label{par sum22}
 n_{0,-1}+p^{2d}\cdot n_{2,1}+p^{4d}\cdot n_{4,-1}=p^m(p^{n+d}+p^{n}-p^m-p^d).
 \end{equation}
Solving the system of equations consisting of (\ref{par
sum02})--(\ref{par sum22}) yields the result. $\square$

{\it\textbf{Proof of Theorem \ref{wei dis C1}}}:
 From (\ref{Wei}) we know that for each non-zero codeword
$c(\al,\be)=\left(c_0,\cdots,c_{l-1}\right)$ $(l=q-1,
c_i=\Tra_1^m(\al\pi^{(p^m+1)i})+\Tra_1^n(\be \pi^{(p^k+1)i}), 0\leq
i\leq l-1, \text{and}\; (\al,\be)\in \bF_{p^m}\times\bF_q)$, the
Hamming weight of $c(\al,\be)$ is
\begin{equation}\label{wei c}
w_H\left(c(\al,\be)\right)=p^{m-t}(p^t-1)-\frac{1}{p^t}\cdot
R(\al,\be)
\end{equation}
where
\[R(\al,\be)=\sum\limits_{a\in \bF_{p^t}^*}T(a\al,a\be)=T(\al,\be)\sum\limits_{a\in \bF_{p^t}^*}\eta_0(a)^{r_{\al,\be}}\]
 by Lemma \ref{qua} (i).

 Let $\eta'$ be the quadratic (multiplicative) character on $\bF_q$.
 Then we have
\begin{enumerate}
    \item[(1).] if $d/t$ or $r_{\al,\be}$ is even, then $\sum\limits_{a\in \bF_{p^t}^*}\eta_0(a)^{r_{\al,\be}}=\sum\limits_{a\in \bF_{p^t}^*}
    1=p^t-1$ and $R(\al,\be)=(p^t-1)T(\al,\be)$.
    \item[(2).] if $d/t$  and $r_{\al,\be}$ are both odd, then $\sum\limits_{a\in \bF_{p^t}^*}\eta_0(a)^{r_{\al,\be}}=\sum\limits_{a\in \bF_{p^t}^*}
    \eta'(a)=0$ and $R(\al,\be)=0$.
\end{enumerate}

Thus the weight distribution of $\cC_1$ can be derived from Theorem
\ref{value dis T} and (\ref{wei c}) directly. For example, if  $d/t$
is odd and $d'=d$, then
\begin{itemize}
    \item[(1).] if
    $r_{\al,\be}=s$ and $T(\al,\be)=p^{m}$, then
    $w_H(c(\al,\be))=(p^t-1)(p^{n-t}-p^{m-t})$.
    \item[(2).]  if
    $r_{\al,\be}=s$ and $T(\al,\be)=-p^{m}$, then
    $w_H(c(\al,\be))=(p^t-1)(p^{n-t}+p^{m-t})$.
    \item[(3).] if $r_{\al,\be}=s-1$, then
    $w_H(c(\al,\be))=(p^t-1)p^{n-t}$.
    \item[(4).] if $r_{\al,\be}=s-2$ and $T(\al,\be)=-p^{m+d}$, then
    $w_H(c(\al,\be))=(p^t-1)(p^{n-t}+p^{m+d-t})$.
\end{itemize} $\square$

\section{Appendix C}

{\it\textbf{Proof of Lemma \ref{last lemma}}}:

Define $n(\al,\be,a)$ to be the number of $\ga\in \bF_q$ satisfying
(i) and (ii). From (\ref{def H_al be}) we know that
$XH_{\al,\be}X^T=\Tra_{d}^m(\al x^{p^m+1})+\Tra_{d}^n(\be
x^{p^k+1})$. Combining (\ref{def A_gamma}), (\ref{bil form1}) and
(\ref{bil form2}) we can get
\begin{eqnarray}\label{linear equivalent}
{\setlength\arraycolsep{2pt}
\begin{array}{lcl}
2XH_{\al,\be}+A_{\ga}=0& \Leftrightarrow&\;2XH_{\al,\be}Y^{T}+A_{\ga}Y^T=0\;\text{for all}\;Y\in \bF_{q_0}^s\\[2mm]
& \Leftrightarrow&\;\Tra_{d}^n\left(y\phi_{\al,\be}(x)\right)+\Tra_{d}^n(\ga y)=0\;\text{for all}\;y\in \bF_q \\[2mm]
& \Leftrightarrow&\;\Tra_{d}^n\left(y(\phi_{\al,\be}(x)+\ga)\right)=0\;\text{for all}\;y\in \bF_q \\[2mm]
&\Leftrightarrow&\phi_{\al,\be}(x)+\ga=0.
\end{array}
}
\end{eqnarray}

Let $x_0$, $x_0'$ be two distinct solutions of (i) (if exists). We
can get $x_0=X_0 \cdot V^T$ and $x'_0=X'_0\cdot V^T$ with $X_0,
X'_0\in \bF_{q_0}^s$ and $V=(v_1,\cdots, v_n)$. Define $\Delta
X_0=X'_0-X_0$ and $\Delta x_0=x'_0-x_0=X_0\cdot V^T$. Then
\[\phi_{\al,\be}(x_0)+\ga=\phi_{\al,\be}(x'_0)+\ga=0\]
gives us
\[2X_0H_{\al,\be}+A_{\ga}=2X'_0H_{\al,\be}+A_{\ga}=0\]
and hence
\[\Delta X_0\cdot H_{\al,\be}=0.\]
It follows that
\[
\begin{array}{ll}
 &X'_0\cdot  H_{\al,\be}\cdot {X'_0}^T=(X_0+\Delta X_0)\cdot H_{\al,\be}\cdot (X_0+\Delta X_0)^T\\[2mm]
 &\qquad=X_0  H_{\al,\be} {X}_0^T+ \Delta X_0\cdot H_{\al,\be}\cdot (\Delta X_0+2 X_0)=X_0  H_{\al,\be} {X}_0^T.
 \end{array}
 \]

 Therefore
\[
\begin{array}{ll}
&\Tra_{t}^m(\al {x'_0}^{p^m+1})+\Tra_{t}^n(\be
{x'_0}^{p^k+1})=\Tra_{t}^{d}\left(\Tra_{t}^m(\al
{x'_0}^{p^m+1})+\Tra_{t}^n(\be
{x'_0}^{p^k+1})\right)=\Tra_{t}^{d}\left(X'_0\cdot
H_{\al,\be}\cdot {X'_0}^T\right)\\[2mm]
&=\Tra_{t}^{d}\left(X_0 H_{\al,\be}
{X_0}^T\right)=\Tra_{t}^{d}\left(\Tra_{t}^m(\al
{x_0}^{p^m+1})+\Tra_{t}^n(\be {x_0}^{p^k+1})\right)=\Tra_{t}^m(\al
{x_0}^{p^m+1})+\Tra_{t}^n(\be {x_0}^{p^k+1}).
\end{array}
\]
Hence $n(\al,\be,a)$ is well-defined (independent of the choice of
$x_0$).

If (i) is satisfied,  that is, $\phi_{\al,\be}(x)+\ga=0$ has
solution(s) in $\bF_q$ which yields that $2XH_{\al,\be}+A_{\ga}=0$
has solution(s). Note that $\mathrm{rank}\, H_{\al,\be}=s-i$.
Therefore $2XH_{\al,\be}+A_{\ga}=0$ has $q_0^i=p^{id}$ solutions
with $X\in \bF_{q_0}^s$ which is equivalent to saying
$\phi_{\al,\be}(x)+\ga=0$ has $p^{id}$ solutions in $\bF_q$.
Conversely, for any $x_0\in \bF_q$, we can determine $\ga$ by
$\ga=-\phi_{\al,\be}(x_0).$ Let $N(\al,\be,a)$ be the number of
$x_0\in \bF_q$ satisfying (ii). Then we have
$n(\al,\be,a)=N(\al,\be,a)\big{/}p^{id}$.

Let $\chi'(x)=\zeta_p^{\Tra_1^{t}(x)}$ with $x\in \bF_{p^t}$ be an
additive character on $\bF_{p^t}$ and
$G(\eta',\chi')=\sum\limits_{x\in \bF_{p^t}}\eta'(x)\chi'(x)$ be the
Gaussian sum on $\bF_{p^t}$. We can calculate
\[
\begin{array}{rcl}
p^t\cdot N(\al,\be,a)&=&\sum\limits_{x\in \bF_q}\sum\limits_{\om\in
\bF_{p^t}}\zeta_p^{\Tra_1^{t}\left(\om\cdot\left( \Tra_{t}^m(\al
x^{p^m+1})+\Tra_{t}^n(\be
x^{p^k+1})-a\right)\right)}\\[3mm]
&=&p^n+\sum\limits_{\om \in \bF_{p^t}^*}T(\om \al,\om \be)\zeta_p^{-\Tra_1^{t}(a\om)}\\[2mm]
&=&p^n+T(\al,\be)\cdot\sum\limits_{\om \in
\bF_{p^t}^*}\eta_0(\om)^{s-i}\chi'(-a\om)
\end{array}
\]
where the 3-rd equality holds from (\ref{om rel}) for any $\om \in
\bF_{p^t}^*\subset \bF_{q_0}^*$.
\begin{itemize}
    \item If $s-i$ and $d/t$ are both odd, and $a=0$, then
    $\eta_0(\om)^{s-i}=\eta'(\om)$ and $N(\al,\be,0)=p^{n-t}$.
    \item If $s-i$ and $d/t$ are both odd, and $a\neq 0$, then
     \[
\begin{array}{rcl}
N(\al,\be,a)&=&p^{n-t}+\frac{1}{p^t}\cdot
T(\al,\be)\cdot\sum\limits_{\om \in
\bF_{p^t}^*}\eta_0(\om)\chi'(-a\om)\\[2mm]
&=&p^{n-t}+\frac{1}{p^t}\cdot T(\al,\be)\cdot\eta'(-a)\cdot G(\eta',\chi')\\[2mm]
&=&p^{n-t}+\veps \eta'(a)p^{\frac{n+id-t}{2}}
\end{array}
\]
where the 2-nd equality follows from the explicit evaluation of
quadratic Gaussian sums (see \cite{Lid Nie}, Theorem 5.15 and 5.33).
    \item If $s-i$ or $d/t$ is even, and $a=0$, then $\eta_0(\om)^{s-i}=1$
    for any $\om\in \bF_{p^t}^*$ and $N(\al,\be,0)=p^{n-t}+
    \veps (p^t-1)p^{\frac{n+id}{2}-t}$.
    \item If $s-i$ or $d/t$ is even, and $a\neq 0$, then
    \[
\begin{array}{rcl}
N(\al,\be,a)&=&p^{n-t}+\frac{1}{p^t}\cdot T(\al,\be)\cdot \sum\limits_{\om\in \bF_{p^t}^*}\chi'(-a\om)\\[2mm]
&=&p^{n-t}-\veps p^{\frac{n+id}{2}-t}.
\end{array}
\]
\end{itemize}
Therefore we complete the proof by dividing $p^{id}$. $\square$

{\it\textbf{Proof of Theorem \ref{value dis S}}}: Define
\[\Xi=\left\{(\al,\be,\ga)\in \bF_q^3\left|S(\al,\be,\ga)=0 \right.\right\}\]
and $\xi=\big{|}\Xi\big{|}$.

 Recall $n_i,H_{\al,\be},,r_{\al,\be},A_{\ga}$ in Section 1 and
$N_{i,\veps},n_{i,\veps,}$ in the proof of Lemma \ref{rank}. Note
that $2XH_{0,0}+A_{\ga}=0$ is solvable if and only if $\ga=0$. If
$(\al,\be)\in N_{i,\veps}$, then the number of $\ga\in \bF_q$ such
that $2XH_{\al,\be}+A_{\ga}=0$ is solvable is $q_0^{s-i}=p^{n-id}$.
 From
Lemma \ref{rank} (i) we know that
\begin{itemize}
  \item if $d'=d$ and $(\al,\be)\neq (0,0)$, then
$r_{\al,\be}=s-i$ for some $i\in\{0,1,2\}$. By Lemma \ref{qua} (ii)
we have
\begin{equation}\label{value xi1}
{\setlength\arraycolsep{2pt}
\begin{array}{lll}
\xi&=&p^{n}-1+(p^{n}-p^{n-d})n_{1}+(p^{n}-p^{n-2d})n_2\\[2mm]
&=&(p^{n}-1)(p^{3m-d}-p^{3m-2d}+p^{3m-3d}-p^{n-2d}+1).
\end{array}
}
\end{equation}
  \item if $d'=2d$, similarly we have
\begin{equation}\label{value xi2}
{\setlength\arraycolsep{2pt}
\begin{array}{lll}
\xi&=&p^{n}-1+(p^{n}-p^{n-2d})n_{2,1}+(p^{n}-p^{n-4d})n_{4,-1}\\[2mm]
&=&(p^{n}-1)(p^{3m-d}-p^{3m-2d}+p^{3m-3d}-p^{3m-4d}+p^{3m-5d}\\[1mm]
&&\qquad+p^{n-d}-2p^{n-2d}+p^{n-3d}-p^{n-4d}+1).
\end{array}
}
\end{equation}
\end{itemize}

Assume $(\al,\be)\in N_{i,\veps}$ and $\phi_{\al,\be}(x)+\ga=0$ has
solution(s) in $\bF_q$ (choose one, say $x_0$). Then by Lemma
\ref{qua} we get
\[S(\al,\be,\ga)=\zeta_p^{-\left(\Tra_t^m (\alpha
x_0^{p^{m}+1})+\Tra_t^n(\beta x_0^{p^k+1})\right)}\cdot
T(\al,\be).\]

Applying Lemma \ref{last lemma} for $t=1$ and Theorem \ref{value dis
T}, we get the result. $\square$

{\it\textbf{Proof of Theorem \ref{cor dis}}}: Recall
$M_{(\al_1,\be_1),(\al_2,\be_2)}(\tau)$ defined in (\ref{cor fun})
and (\ref{coe cor}).
 Fix $(\al_2,\be_2)\in
\bF_{p^m}\times \bF_q$, when $(\al_1,\be_1)$ runs through
$\bF_{p^m}\times \bF_q$ and $\tau$ takes value from $0$ to $q-2$,
$(\al',\be',\ga')$ runs through $\bF_{p^m}\times
\bF_q\times\left\{\bF_{q}\big{\backslash}\{1\}\right\}$ exactly one
time.

For any possible value $\kappa$ of $S(\al,\be,\ga)$, define

\[s_{\kappa}=\#\left\{(\al,\be,\ga)\in \bF_{p^m}\times \bF_q\times \bF_q\,\displaystyle{|}\,S(\al,\be,\ga)=\kappa\right\}\]

\[s^1_{\kappa}=\#\left\{(\al,\be,\ga)\in \bF_{p^m}\times \bF_q\times \left\{\bF_q\backslash\{1\}\right\}\,\big{|}\,S(\al,\be,\ga)=\kappa\right\}\]

and
\[t_{\kappa}=\#\left\{(\al,\be)\in \bF_{p^m}\times \bF_q\,\displaystyle{|}\,T(\al,\be)=\kappa\right\}.\]

By Lemma \ref{q-2} we have
\[s_{\kappa}^1=\frac{q-2}{q-1}\times (s_{\kappa}-t_{\kappa})+t_{\kappa}=\frac{q-2}{q-1}\times s_{\kappa}+\frac{1}{q-1}\times t_{\kappa}.\]

Define $M_{\kappa}$ to be the number of $(\al_1,\be_1,\al_2,\be_2)$
such that $M_{(\al_1,\be_1),(\al_2,\be_2)}=\kappa$. Hence we get
\[M_{\kappa}=p^{3m}\cdot s_{\kappa}^1=p^{3m}\cdot\left(\frac{q-2}{q-1}\cdot s_{\kappa}+\frac{1}{q-1}\cdot t_{\kappa}\right).\]

Then the result follows from Theorem \ref{value dis T} and Theorem
\ref{value dis S}. $\square$

{\it\textbf{Proof of Theorem \ref{wei dis C2}}}: From (\ref{Wei}) we
know that for each non-zero codeword
$c(\al,\be,\ga)=\left(c_0,\cdots,c_{q-2}\right)$ $(
c_i=\Tra_{t}^m(\al\pi^{(p^m+1)i})+\Tra_{t}^n(\be \pi^{(p^k+1)i}+\ga
\pi^{i}),\, 0\leq i\leq q-2,\, \text{and}\; (\al,\be,\ga)\in
\bF_{p^m}\times \bF_q^2)$, the Hamming weight of $c(\al,\be,\ga)$ is
\begin{equation}\label{wei c2}
w_H\left(c(\al,\be,\ga)\right)=p^{n-t}(p^t-1)-\frac{1}{p^t}\cdot
R(\al,\be,\ga)
\end{equation}
where
\[R(\al,\be,\ga)=\sum\limits_{\om\in \bF_{p^t}^*}S(\om\al,\om\be,\om\ga).\]

For any $\om\in \bF_{p^t}^*\subset \bF_{q_0}^*$, we have
$\phi_{\om\al,a\be}(x)+\om\ga=0$ is equivalent to
$\phi_{\al,\be}(x)+\ga=0$. Let $x_0\in \bF_q$ be a solution of
$\phi_{\al,\be}(x)+\ga=0$ (if exist).
\begin{itemize}
    \item[(1).]
If $\phi_{\al,\be}(x)+\ga=0$ has solutions in $\bF_q$,  then by
Lemma \ref{qua} and (\ref{om rel}) we have
\[
\begin{array}{ll}
&S(\om\al,\om\be,\om\ga)=\zeta_p^{-\left(\Tra_1^m(\om\al
x_0^{p^m+1})+\Tra_1^n(\om\be
x_0^{p^k+1})\right)}T(\om\al,\om\be)\\[1mm]
&\qquad=\zeta_p^{-\left(\Tra_1^m(\om\al x_0^{p^m+1})+\Tra_1^n(\om\be
x_0^{p^k+1})\right)}\eta_0(\om)^{r_{\al,\be}}T(\al,\be).
\end{array}
\]
 Hence
\begin{equation}\label{value of R}
R(\al,\be,\ga)=T(\al,\be)\sum\limits_{\om\in
\bF_{p^t}^*}\zeta_p^{-\Tra_1^t\left(\om\cdot\left(\Tra_t^m(\al
x_0^{p^m+1})+\Tra_t^n(\be
x_0^{p^k+1})\right)\right)}\eta_0(\om)^{r_{\al,\be}}.\nonumber
\end{equation}
    Fix $(\al,\be)\in N_{i,\veps}$ for $\veps=\pm 1$,
and suppose $\phi_{\al,\be}(x)+\ga=0$ is solvable in $\bF_q$. Denote
by $\vartheta=\Tra_t^m(\al x_0^{p^m+1})+\Tra_t^n(\be x_0^{p^k+1})$.
Then
\begin{itemize}
    \item if $s-i$ and $d/t$ are both odd, and $\vartheta=0$, then
\[R(\al,\be,\ga)=T(\al,\be)\sum\limits_{\om\in
\bF_{p^t}^*}\eta'(\om)=0.\]
    \item if $s-i$ and $d/t$ are both odd, and $\vartheta\neq 0$, then by the result of
quadratic Gaussian sums
\[
\begin{array}{rcl}
R(\al,\be,\ga)&=&T(\al,\be)\eta'(-\vartheta)G(\eta',\chi')\\[2mm]
&=&\veps\eta'(\vartheta)p^{\frac{n+id+t}{2}},\\[2mm]
&=&\left\{
\begin{array}{ll}
p^{m+\frac{id+t}{2}}&\text{if}\; \veps=\eta'(\vartheta),\\[2mm]
-p^{m+\frac{id+t}{2}}&\text{if}\; \veps=-\eta'(\vartheta).
\end{array}
\right.
\end{array}
\]

    \item if $s-i$ or $d/t$ is even, and $\vartheta=0$, then $\eta_0(\om)^{r_{\al,\be}}=1$ for $\om\in \bF_{p^t}^*$ and
$R(\al,\be,\ga)=(p^t-1)T(\al,\be)=\veps(p^t-1)p^{m+\frac{id}{2}}$.
    \item if $s-i$ or $d/t$ is even, and $\vartheta\neq 0$, then
    $\eta_0(\om)^{r_{\al,\be}}=1$ for $\om\in \bF_{p^t}^*$
and $R(\al,\be,\ga)=-T(\al,\be)=-\veps p^{m+\frac{id}{2}}$.
\end{itemize}
    \item[(2).] If $\phi_{\al,\be}(x)+\ga=0$ has no solutions in
    $\bF_q$ which implies that
$\phi_{\om\al,\om\be}(x)+\om\ga=0$ also has no solutions in $\bF_q$
for any $\om\in \bF_{p^t}^*\subset \bF_{q_0}$. Hence
$S(\om\al,\om\be,\om\ga)=0$ and $R(\al,\be,\ga)=0$.
\end{itemize}

 Thus the weight distribution of $\cC_2$ can be derived from Theorem \ref{value dis T},  Lemma
 \ref{last lemma}, (\ref{value xi1}), (\ref{value xi2})
 and (\ref{wei c2}) directly.
$\square$

\section{Conclusion}

\quad In this paper we have studied the exponential sums
$\sum\limits_{x\in \bF_q}\zeta_p^{\Tra_1^m (\alpha
x^{p^{m}+1})+\Tra_1^n(\beta x^{p^k+1})}$ and $\sum\limits_{x\in
\bF_q}\zeta_p^{\Tra_1^m (\alpha x^{p^{m}+1})+\Tra_1^n(\beta
x^{p^k+1}+\ga x)}$ with $\al\in \bF_{p^m},(\be,\ga)\in \bF_q^2$.
After giving the value distribution of $\sum\limits_{x\in
\bF_q}\zeta_p^{\Tra_1^m (\alpha x^{p^{m}+1})+\Tra_1^n(\beta
x^{p^k+1})}$ and $\sum\limits_{x\in \bF_q}\zeta_p^{\Tra_1^m (\alpha
x^{p^{m}+1})+\Tra_1^n(\beta x^{p^k+1}+\ga x)}$, we determine the
correlation distribution among a family of sequences, and the weight
distributions of the cyclic codes $\cC_1$ and $\cC_2$. These results
generalize  \cite{Zen Li}.

\section{Acknowledgements}
\quad The authors will thank the anonymous referees for their
helpful comments.

\end{document}